\newtheorem{theorem}{\textbf{Theorem}}
\newtheorem{corollary}{\textbf{Corollary}}
\newtheorem{remark}{\textbf{Remark}}
\begin{document}
\bibliographystyle{IEEEtran}

\title{Performance Analysis for Training-Based Multi-Pair Two-Way Full-Duplex Relaying with Massive Antennas}
\author{Zhanzhan Zhang$^{*}$, Zhiyong Chen$^{*}$, Manyuan Shen$^{*}$, Bin Xia$^{*}$, Weiliang Xie$^{\dag}$, and Yong Zhao$^{\dag}$\\
${^*}$Department of Electronic Engineering, Shanghai Jiao Tong University, Shanghai, P. R. China\\
$^\dag$China Telecom Corporation Limited Technology Innovation Center\\
Email: \{mingzhanzhang, zhiyongchen, myshen, bxia\}@sjtu.edu.cn, \{xiewl, zhaoyong\}@ctbri.com.cn}
\maketitle

\begin{abstract}
This paper considers a multi-pair two-way amplify-and-forward relaying system, where multiple pairs of full-duplex users are served via a full-duplex relay with massive antennas, and the relay adopts maximum-ratio combining/maximum-ratio transmission (MRC/MRT) processing. The orthogonal pilot scheme and the least square method are firstly exploited to estimate the channel state information (CSI). When the number of relay antennas is finite, we derive an approximate sum rate expression which is shown to be a good predictor of the ergodic sum rate, especially in large number of antennas. Then the corresponding achievable rate expression is obtained by adopting another pilot scheme which estimates the composite CSI for each user pair to reduce the pilot overhead of channel estimation. We analyze the achievable rates of the two pilot schemes and then show the relative merits of the two methods. Furthermore, power allocation strategies for users and the relay are proposed based on sum rate maximization and max-min fairness criterion, respectively. Finally, numerical results verify the accuracy of the analytical results and show the performance gains achieved by the proposed power allocation.
\end{abstract}

\begin{keywords}
Massive multiple-input multiple-output (MIMO), full-duplex, two-way relay, pilot scheme, power control.
\end{keywords}

\section{Introduction}
Massive multiple-input multiple-output (MIMO), an emerging technology which employs a few hundreds even thousands of antennas, is recently a very hot research topic in wireless communications \cite{massiveMIMOnextg}. By providing great array and spatial multiplexing gains, massive MIMO systems can be of higher spectral and energy efficiencies than conventional MIMO. Besides, the simplest linear precoders and detectors (such as maximum-ratio combining/maximum-ratio transmission (MRC/MRT)) can achieve optimal performance as the nonlinears \cite{massiveMIMOhowManyAntennas}. Therefore, massive MIMO is widely regarded as one of the cornerstone technologies for next-generation mobile networks. Furthermore, the related issues in terms of the implementation of massive MIMO technologies, such as the channel state information (CSI) acquisition and beamforming techniques, have been being proposed and discussed in recent 3GPP meetings \cite{massiveMIMO3GPPmeeting}.

On the other hand, full-duplex (FD) systems have attracted significant interest \cite{fdsystems}, due to the provided double spectral efficiency (SE) of traditional half-duplex (HD) systems.
However, by receiving and transmitting simultaneously on the same channel, FD systems suffer from a great
drawback of the inherent loop interference (LI) due to the signal leakage from the FD node output to input.

To suppress loop interference, many researches have already been done \cite{LIC10Stanford, LIC11Stanford, MitigationofLIfdMIMO, linearFDMIMORelay14C, LCdesignFDMIMO14TWC, Experiment12TWC}.
LI suppression approaches can be categorized as passive cancellation and active cancellation and the active cancellation further includes analog cancellation and digital cancellation.
For example, \cite{LIC10Stanford} showed that LI can be reduced to within a few dB of the noise floor by combining passive and active cancellations.
In \cite{LIC11Stanford}, the authors proposed the signal inversion and adaptive cancellation, which support wideband and high power systems, thus making it possible to build FD 802.11n devices.
In addition, \cite{MitigationofLIfdMIMO} extended the cancellation for single channel case to the FD MIMO relay case and proposed new spatial suppression techniques, such as minimum mean square error (MMSE) filtering.
The authors in \cite{linearFDMIMORelay14C} also studied the spatial processing techniques for a FD MIMO relay, and indicated that LI suppression is preferable to pre-cancellation at the relay transmitter. Then the joint precoding/decoding design with low complexity to mitigate LI in spatial domain for FD MIMO relaying was proposed in \cite{LCdesignFDMIMO14TWC}. In \cite{Experiment12TWC}, it was shown that the combination of digital and analog cancellation can sometimes increase the LI. Besides, it has been reported in \cite{Experiment12TWC, FDcancel110dB, FDinRelayand100dBsuppress} that 70-110 dB overall suppression of the LI can be realized. In a word, recent achievements in radio frequency (RF)/circuit design have made it feasible to perform full-duplex in certain scenarios. On the other hand, in the 3GPP process, it has been proposed by Huawei, NTT DOCOMO, etc., that new radio (NR) access technology should support of FD in the future in a forward compatible way \cite{massiveMIMO3GPPmeeting}.
Furthermore, in a recent paper \cite{massivemimo1}, the authors utilized the large-scale antennas to eliminate the loop interference due to the large array gain, and this discovery promotes the joint consideration of  massive MIMO and full-duplex in subsequent analysis. Nevertheless, the work considered the one-way relay with massive antennas, and similar research in such systems for two-way channels is barely addressed.

Inspired by both \textit{ad hoc} and infrastructure-based (e.g., cellular and WiFi) networks, two-hop wireless relaying is the most possible use case which can benefit from the FD operation \cite{inBandFD14JSAC, FDinRelayand100dBsuppress}, since in wireless relaying, the data traffic is inherently symmetric as far as the relay always froward the received information, and this character can efficiently utilize the FD ability of doubling the SE. Recently, FD wireless relaying has been discussed and included in the 3GPP standard \cite{FDrelayin3GPP}. In the experimental aspect, a FD MIMO relay for LTE-A has been studied in lab and proved to be technologically feasible \cite{FDinRelayand100dBsuppress}.
Then, in this paper, it is straightforward to generalize the FD wireless relaying model to the multi-pair two-way FD massive MIMO relay system model, by considering two-way relaying to more efficiently utilize the time/frequency resource. And either a mobile terminal or a base station can act as the FD relay.

Over the recent years, much progress has been made on  two-way  or one-way relaying systems.
For example, the authors in \cite{ManavTWAFOSTBC} studied the performance of a two-way amplify-and-forward (AF) MIMO relay system based on orthogonal space-time block codes (OSTBCs). In \cite{ManavTWSatelliteRelay}, a differential modulation based two-way relaying protocol was proposed for two-way AF satellite relaying communication.
In \cite{fdMIMOrelay}, the joint beamforming optimization and power control were investigated  for a two-way FD MIMO relay system. However, \cite{ManavTWAFOSTBC,ManavTWSatelliteRelay,fdMIMOrelay} all considered the traditional MIMO with a small number of antennas at the relay.
Besides, the power efficiency of a multi-pair AF relaying model with massive MIMO was investigated in \cite{HASura13ICC}, and it was shown that massive MIMO could greatly improve the power efficiency while maintaining a given quality-of-service. Nonetheless, it only considered the one-way HD relaying.
In addition, \cite{massivemimo2} studied the spectral efficiency and energy efficiency for a multi-pair two-way massive MIMO relay system, but only the case of infinite number of relay antennas was considered in \cite{massivemimo2}. Moreover,  \cite{ergodicRate} discussed the achievable ergodic rate with a finite number of relay antennas for the same system as \cite{massivemimo2}, however, both \cite{massivemimo2} and \cite{ergodicRate} dealt with the HD relays.

In this paper, we model a multi-pair two-way full-duplex AF relay system where the relay has a large-scale antenna array, in the presence of inter-user interference, and the MRC/MRT technique is considered. For massive MIMO systems, it is a big challenge to acquire the CSI.
In our system model, the relay needs to acquire the global CSI to perform MRC/MRT processing. The model involves both the uplink channels (from users to the receive antenna array of the relay) and the downlink channles (from the transmit antennas of the relay to users). In general, the estimations of channels are obtained by transmitting pilot signals. Since the frequency-division duplex (FDD) scheme, where users estimate the downlink CSI based on the pilot signals transmitted by the relay and feedback them to the relay, is prohibitive in massive MIMO relay networks \cite{zhengzhengxiang}, we consider the time division duplex (TDD) system where users transmit pilot signals to both the transmit and receive antennas of the relay, then the CSI estimated by the relay transmit antennas is considered as the downlink CSI based on channel reciprocity.

The contributions of this work are summarized as follows.
\begin{itemize}
  \item We derive a lower bound and an approximate expression for the ergodic sum rate with a finite and large number of relay antennas based on the statistical CSI, and the results are obtained by utilizing the orthogonal pilot scheme and least square (LS) channel estimation. The approximate sum rate expression is demonstrated to be a tight approximation to the ergodic sum rate. It is also shown that the sum rate can be increased significantly by adding the relay antenna number.
  \item We also derive the achievable rate expression by employing another pilot transmission scheme, which estimates the composite channel for each user pair. We present the comprehensive theoretical analysis on the achievable rates of the two pilot schemes and provide the valuable insights to show the relationship between the two methods.
  \item We derive the power allocation for maximizing the achievable sum rate and the minimum signal-to-interference-plus-noise ratio (SINR) of all users, respectively.
      Furthermore, we present the comparison of our scheme with other schemes and demonstrate that when the relay antenna number is very large, our scheme performs better than the corresponding one-way FD relaying scheme as well as the two-way HD relaying scheme.
\end{itemize}

The remainder of this paper is organized as follows. In Section \ref{systemModel}, the system model of the multi-pair massive MIMO two-way FD relay channel is described. In Section \ref{rateWithICE}, we derive an approximate sum rate expression based on conventional pilot scheme and LS channel estimation, when the relay antenna number is finite. In Section \ref{rateWithCCE}, we consider another pilot scheme and the corresponding achievable rate expression is also obtained. Section \ref{performanceEvaluation} compares the two pilot schemes and addresses the problem of power allocation.
Numerical results are provided in Section \ref{numericalResults}. Finally, Section \ref{conclusion} concludes the paper.

\textit{\textbf{Notations:}} Boldface uppercase and boldface lowercase letters denote  matrices and column vectors, respectively. $\mathbb{E}\{\cdot\}$, $\left\|\cdot\right\|_2$, $\mathrm{Tr}(\cdot)$, $(\cdot)^H$, $(\cdot)^T$, $(\cdot)^*$ stand for the expectation, Euclidean norm, the trace of a square matrix, the conjugate transpose, the transpose and the conjugate of a matrix, respectively. $\mathcal{CN}(\mathbf{x}, \mathbf{\Sigma})$ represents the distribution of a circularly symmetric complex Gaussian vector with mean vector  $\mathbf{x}$ and covariance matrix $\mathbf{\Sigma}$. $\textbf{I}_\mathrm{N}$ denotes an $N\times N$ identity matrix.
\section{System Model}\label{systemModel}
As shown in Fig. \ref{system model}, we consider a multi-pair two-way relaying network, where $K$ ($K\geqslant 2$) user pairs try to exchange information within pair through a relay ($R$) which operates in AF protocol. Let ($S_k,S_{k'}$) denote one source pair, $(k,k')=(2m-1,2m)$ or $(2m,2m-1)$, $m=1,2,\cdots,K$. Besides, all the user equipments and the relay operate in the FD mode, so that all nodes suffer from self-LI due to the simultaneous transmission and reception. Assume that each FD user has one FD antenna \cite{TWC09oneFDantenna, SPL12oneFDantenna, HongyuCuioneFDantenna}, and the FD relay is equipped with $N_r$ receive antennas and $N_t$ transmit antennas \footnote{We assume that the FD antenna at each user can be used for transmission and reception simultaneously. Based on this assumption and the orthogonality of the pilot sequences, the length of pilot symbols ($\tau$) consumed in the channel estimation stage would be at least the number of users ($2K$), which is half of that ($4K$) under the scenario where each user has two antennas in which one for transmission and the other for reception. In addition, the transmit antennas and receive antennas at the FD relay are separated,  then $N_t$ receiving RF chains, for receiving pilot signals during the channel estimation stage, are needed in the transmit antenna set apart from $N_t$ transmitting RF chains, while only $N_r$ receiving RF chains are needed in the receive antenna set.} \cite{massivemimo1, globecom05twoFDantenna}, and let $\kappa=N_t/N_r$.
We consider the scenario where the $K$ users with odd subscripts ($S_{2m-1}$) stay in one area and the other $K$ users with even subscripts ($S_{2m}$) stay in another area, thus direct links between $S_{k}$ and $S_{k'}$ do not exist due to the high path loss and shadow fading, while one user can inevitably receive signals from nearby users in the same area due to FD operation, and we regard this interference as inter-user interference.
In addition, we adopt the linear precoder and detector MRC/MRT at the relay in this paper, which is a common technique in the massive MIMO system.


\begin{figure}
  \centering
  \includegraphics[width=3.5in]{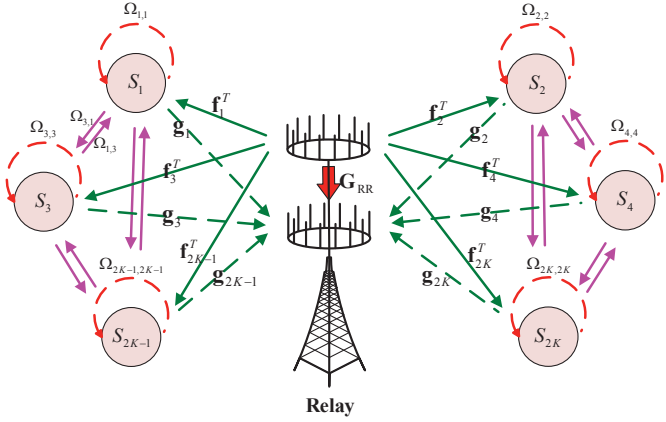}
  \caption{Multi-pair two-way full-duplex massive MIMO relay system in the presence of inter-user interference.}\label{system model}
\end{figure}

\newcounter{myEquationCounter}
\begin{figure*}[!t] 
\normalsize

\setcounter{myEquationCounter}{\value{equation}}
\setcounter{equation}{6}
\begin{equation}\label{eq8}
\tilde{y}_k=\underbrace{\alpha\sqrt{P_S}\mathbf{f}_\mathrm{k}^T\mathbf{W}\mathbf{g}_{\mathrm{k}'}x_{k'} }_{\text{desired signal}}
+\underbrace{\alpha\sqrt{P_S}\sum_{j=1\atop j\neq k,k'}^{2K}{\mathbf{f}_\mathrm{k}^T\mathbf{W}\mathbf{g}_{\mathrm{j}}x_{j}}}_{\text{inter-pair interferences}}
+\underbrace{\alpha\mathbf{f}^T_\mathrm{k}\mathbf{WG}_{\mathrm{RR}}\mathbf{\tilde{x}}_\mathrm{R} + \alpha\mathbf{f}^T_\mathrm{k}\mathbf{Wz}_\mathrm{R}}_{\text{LI and noise from the relay}}
+\underbrace{\sqrt{P_S}\sum_{ i\in\mathrm{U}_{k} }{\Omega_{k,i}x_{i}}  }_{\text{inter-user interferences and self-LI}}
+\underbrace{z_k}_{\text{noise} }.
\end{equation}
\hrulefill
\vspace*{0pt}
\end{figure*}

Before further description, we assume that some traditional loop interference cancellation (LIC) techniques have been executed at the users and the relay in this paper, such as applying RF attenuation, time-domain suppression and/or spatial cancellation techniques \cite{li1,SPL12oneFDantenna,MitigationofLIfdMIMO}. Then the residual LI channels can be modeled as Rayleigh fading distribution \cite{massivemimo1,MitigationofLIfdMIMO}. Furthermore, the residual LIs due to the imperfection of LIC methods are assumed to be additional Gaussian noise variables \cite{HongyuCuioneFDantenna, MitigationofLIfdMIMO, LImodelJSAC, LInoiseModelAndLICmethodsurvey}. This assumption will be the worst-case scenario regarding the achievable data rate if the residual LI is not Gaussian \cite{LImodelJSAC, GaussianNoiseWorstCase}.

\subsection{Signal Model}
At time instant $n$, $S_k$ ($k=1,2, \cdots, 2K$) transmits the signal $\sqrt{P_S}x_k(n)$ to the relay, and at the same time, the relay broadcasts the signal $\mathbf{x}_\mathrm{R}(n)\in \mathbb{C}^{N_t\times1}$ to all source nodes. Here, we consider that each user has the same transmit power $P_S$ and $\mathbb{E}\left\{|x_k(n)|^2\right\}=1$. The transmit power of the relay is restricted by $P_R$, so we have $P_R=\mathbb{E}\left\{\mathrm{Tr}\left[\mathbf{x}_\mathrm{R}(n)\mathbf{x}_\mathrm{R}^H(n)\right]\right\}$. Therefore the received signals at the relay and the source node $S_k$ are, respectively
\setcounter{equation}{0}
\begin{equation}
\mathbf{y}_\mathrm{R}(n)=\sqrt{P_S}\mathbf{G}\mathbf{x}(n)+\mathbf{G}_{\mathrm{RR}}\mathbf{x}_\mathrm{R}(n)+\mathbf{z}_\mathrm{R}(n), \label{eq1}
\end{equation}
\begin{equation}
y_k(n)=\mathbf{f}_\mathrm{k}^T\mathbf{x}_\mathrm{R}(n)+\sum_{ i\in\mathrm{U}_{k} }{\Omega_{k,i}\sqrt{P_S}x_{i}(n)}+z_k(n),\label{eq2}
\end{equation}
where the set $\mathrm{U}_k=\{1,3,\cdots,2K-1\}$ if $k$ is an odd number or $\mathrm{U}_k=\{2,4,\cdots,2K\}$ otherwise, and $\mathbf{x}(n)=[x_1(n),x_2(n),\cdots,x_{2K}(n)]^T$. Let us define $\mathbf{G}=[\mathbf{g}_1,\mathbf{g}_2,\cdots,\mathbf{g}_{2\mathrm{K}}]$, where $\mathbf{g}_\mathrm{k}\in\mathbb{C}^{N_r\times1}$ denotes the uplink channels between the antenna of $S_k$ and the receive antenna array of the relay. Also we define $\mathbf{F}=[\mathbf{f}_1,\mathbf{f}_2,\cdots,\mathbf{f}_{2\mathrm{K}}]$, where $\mathbf{f}_\mathrm{k}^T\in\mathbb{C}^{1\times N_t}$ denotes the downlink channels from the transmit antenna array of $R$ to the antenna of $S_k$.
$\mathbf{G}$ and $\mathbf{F}$ are assumed to obey the independent identically distributed (i.i.d.) Rayleigh fading and therefore
$\mathbf{g}_\mathrm{k} \thicksim \mathcal{CN}(\mathbf{0},\beta_{uk}\mathbf{I}_\mathrm{N_r})$ and $\mathbf{f}_\mathrm{k} \thicksim \mathcal{CN}(\mathbf{0},\beta_{dk}\mathbf{I}_\mathrm{N_t})$.
Hence, $\mathbf{G}$ and $\mathbf{F}$ can be expressed as $\mathbf{G}=\mathbf{H}_\mathrm{u}\mathbf{D}^{1/2}_\mathrm{u}$ and $\mathbf{F}=\mathbf{H}_\mathrm{d}\mathbf{D}^{1/2}_\mathrm{d}$, respectively, where $\mathbf{H}_\mathrm{u}$ and $\mathbf{H}_\mathrm{d}$ denote the small-scale fading with i.i.d. $\mathcal{CN}(0,1)$ random entries,
$\mathbf{D}_\mathrm{u}$ and $\mathbf{D}_\mathrm{d}$ are diagonal matrices representing the large-scale fading, and the $k$-th diagonal elements of $\mathbf{D}_\mathrm{u}$ and $\mathbf{D}_\mathrm{d}$ are denoted as $\beta_{uk}$ and $\beta_{dk}$, respectively.
In addition, $\mathbf{G}_{\mathrm{RR}}\in\mathbb{C}^{N_r\times N_t}$ and $\Omega_{k,k}$ denote the self-LI channel coefficients at the relay $R$ and user $S_k$ respectively, and the entries of $\mathbf{G}_{\mathrm{RR}}$ and $\Omega_{k,k}$ are i.i.d. $\mathcal{CN}(0,\sigma^2_{LI})$ and $\mathcal{CN}(0,\sigma^2_{k,k})$ random variables, respectively.
$\Omega_{k,i}$ ($i\in\mathrm{U}_{k},i\neq k$) represents the inter-user interference channel coefficient from $S_i$ to $S_k$, and assume $\Omega_{k,i} \thicksim \mathcal{CN}(0,\sigma^2_{k,i})$ \cite{InterUserInterferenceModel}. $\mathbf{z}_\mathrm{R}(n) \thicksim \mathcal{CN}(\mathbf{0},\sigma^2_{nr}\mathbf{I}_\mathrm{N_r})$ is an additive white Gaussian noise (AWGN) vector at the relay and $z_k(n) \thicksim \mathcal{CN}(0,\sigma^2_n)$ means AWGN at $S_k$.

Let $\tau_d$ ($\tau_d\geqslant 1$) denote the processing delay of the relay. At time instant $n$ ($n>\tau_d$), the relay $R$ amplifies the previously received signal $\mathbf{y}_\mathrm{R}(n-\tau_d)$ and broadcasts it to the sources. We thus have
\begin{equation}\label{eq3}
\mathbf{x}_\mathrm{R}(n)=\alpha\mathbf{W}\mathbf{y}_\mathrm{R}(n-\tau_d),
\end{equation}
where $\mathbf{W}\in\mathbb{C}^{N_t\times N_r}$ is the relay processing matrix, and $\alpha$ denotes a power constraint factor at the relay.

Due to the processing delay of the relay, we assume that the transmitted signal $\mathbf{x}_\mathrm{R}(n)$ of the relay is uncorrelated with the received signal $\mathbf{y}_{\mathrm{R}}(n)$ \cite{massivemimo1, relayDelayTWC2011, relayDelayTWC2013}. In addition, after performing some LIC techniques, let $\mathbf{G}_{\mathrm{RR}}\mathbf{\tilde{x}}_\mathrm{R}(n)$ represent the residual LI at the relay. And because the amount of LI is mainly decided by the transmit power $P_R$, we have $\mathbf{G}_{\mathrm{RR}}\mathbf{\tilde{x}}_\mathrm{R}(n) \sim \mathcal{CN}(\mathbf{0}, P_R\sigma_{LI}^2\mathbf{I}_{\mathrm{N_r}})$ according to the previous assumption of the residual LI.
Then, substituting (\ref{eq1}) into (\ref{eq3}) and owing to the power constraint of the relay, i.e. $\mathbb{E}\left\{\mathrm{Tr}\left[\mathbf{x}_\mathrm{R}(n)\mathbf{x}_\mathrm{R}^H(n)\right]\right\}{=}P_R$,
we have\footnote{We consider that the relay adopts the statistical CSI instead of instantaneous CSI to derive $\alpha$, known as the ``fixed gain relay'' \cite{fixedGainRelay}. Then the relay has a long-term power constraint $P_R{=}\mathbb{E}\left\{\mathrm{Tr}\left[\mathbf{x}_\mathrm{R}(n)\mathbf{x}_\mathrm{R}^H(n)\right]\right\}$ where the expectation is taken over the channel realizations as well as the signal and the noise. Note that the ``fixed gain relay'' has lower complexity and is easier to deploy than the ``variable gain relay'' using  instantaneous CSI to derive $\alpha$.}

\setcounter{equation}{3}
\begin{align}\label{eq4}
\alpha=\sqrt{ \frac{P_R}{ P_S\cdot \Delta_1 +(P_R\sigma_{LI}^2+\sigma^2_{nr})\cdot \Delta_2 } },
\end{align}
in which
\begin{align}
&\Delta_1 = \mathbb{E}\left[ \mathrm{Tr}\left( \mathbf{WGG}^H\mathbf{W}^H \right) \right], \label{eq5}\\ 
&\Delta_2 = \mathbb{E}\left[ \mathrm{Tr}\left( \mathbf{WW}^H \right) \right]. \label{eq6}
\end{align}


Then, substituting (\ref{eq1}) and (\ref{eq3}) into (\ref{eq2}), we can get the received signal at $S_k$ in detail as represented by (\ref{eq8}) (see top of next page\footnote{Note that the self-interference ($\alpha\sqrt{P_S}\mathbf{f}_\mathrm{k}^T\mathbf{W}\mathbf{g}_{\mathrm{k}}x_{k}$) which stems from $S_k$ and is amplified and forwarded to itself by the relay due to two-way relaying  is not included in (\ref{eq8}) after applying the self-interference cancellation (SIC) technique. We will introduce the SIC briefly in the following derivation.}),
where the time labels are omitted, we also omit the time labels hereinafter for convenience.
It is seen that the first term of the right hand side of (\ref{eq8}) is the desired signal.
The second term denotes the inter-pair interferences which are transmitted by other source pairs and then are amplified and forwarded to $S_k$ by the relay. The third and fourth terms indicate that the residual LI due to the FD operation of the relay and the noise at the relay are also forwarded to the user by the relay, respectively. The fifth term consists of the inter-user interferences ($\sqrt{P_S}\sum_{i\in \mathrm{U}_k,i\neq k}{\Omega_{k,i}x_i}$) which are caused by nearby users and the self-LI ($\sqrt{P_S}\Omega_{k,k}x_k$) arising from the FD operation of the user itself. And the last term is the local noise.

\section{Achievable Rate Analysis with individual channel estimation}\label{rateWithICE}
In this section, we derive the achievable rate for the multi-pair two-way FD relay system, when the number of relay antennas is finite.

\begin{figure*}[!t] 
\normalsize
\setcounter{equation}{17}
\begin{equation}\label{eq14}
\gamma_k {=} \frac{ P_S \left|  \mathbb{E}\left\{ \mathbf{f}^T_\mathrm{k}\mathbf{W}\mathbf{g}_{\mathrm{k}'} \right\} \right|^2}
{P_S \mathbb{V}\text{ar}( \mathbf{f}^T_\mathrm{k}\mathbf{W}\mathbf{g}_{\mathrm{k}'} )
{+} P_S\sum\limits_{j=1\atop j\neq k,k'}^{2K}{ \mathbb{E}\left\{\left|\mathbf{f}^T_\mathrm{k}\mathbf{W}\mathbf{g}_{\mathrm{j}}\right|^2 \right\} }
{+} \left(P_R\sigma_{LI}^2 {+} \sigma^2_{nr}\right) \mathbb{E}\left\{ \left\|\mathbf{f}^T_\mathrm{k}\mathbf{W}\right\|_2^2 \right\}
{+} \frac{P_S\sum\limits_{i\in\mathrm{U}_{k}}{\sigma_{k,i}^2} {+} \sigma^2_n}{\alpha^2} }.
\end{equation}
\hrulefill
\vspace*{0.0pt}
\end{figure*}

\subsection{Individual Channel Estimation (ICE)}
In this paper, we consider the flat block-fading channel, i.e. the channels during a block keep constant and vary independently across different blocks. The coherence interval (in symbols) of a block is denoted by $T_c$.
Suppose that $\tau$ symbols of the coherence interval $T_c$ are consumed in the pilot transmission phase. All users transmit deterministic pilot sequences ($\sqrt{\tau P_p} \mathbf{\phi}_k\in \mathbb{C}^{1\times \tau}$, $k=1,2,\cdots,2K$) to the relay simultaneously, where $\mathbf{\phi}_k\phi_k^H=1$ and $P_p$ denotes the transmit power of each pilot symbol. Then the received pilot signals at the receive and transmit antenna arrays of the relay are represented by, respectively
\setcounter{equation}{7}
\begin{equation}\label{iCE1}
\mathbf{Y}_{\mathrm{rp}} = \sqrt{\tau P_p} \mathbf{G} \mathbf{\Phi} + \mathbf{Z}_{\mathrm{rp}},
\end{equation}
\begin{equation}\label{iCE2}
\mathbf{Y}_{\mathrm{tp}} = \sqrt{\tau P_p} \mathbf{F} \mathbf{\Phi} + \mathbf{Z}_{\mathrm{tp}},
\end{equation}
where $\mathbf{\Phi}=[\phi_1^T, \phi_2^T, \cdots, \phi_{2K}^T]^T \in \mathbb{C}^{2K\times \tau}$ is the transmitted pilot signal matrix, $\mathbf{Z}_{\mathrm{rp}}\in \mathbb{C}^{N_r\times \tau}$ and $\mathbf{Z}_{\mathrm{tp}}\in \mathbb{C}^{N_t\times \tau}$ denote the AWGN matrices with their elements are all $\mathcal{CN}(0, \sigma_{nr}^2)$ random variables. The goal of channel estimation is to obtain individual CSI for each user, thus all pilot sequences need to be orthogonal to each other, i.e. $\mathbf{\Phi}\mathbf{\Phi}^H=\mathbf{I}_{\mathrm{2K}}$, which requires $\tau\geqslant2K$.

In this paper, the popular LS channel estimation\footnote{ The reason we adopt the LS estimation method here is that it has the lowest complexity than other estimation methods and the main focus of this paper is to study and compare the impact of different pilot schemes on the system performance. Although its estimation error is a little larger than other methods, such as MMSE channel estimation, the estimation error will be cancelled out by the large array gain offered by massive MIMO. Thus we consider the popular LS approach.} \cite{LSchannelEstimation} is applied at the relay, and the LS estimations of the matrices $\mathbf{G}$ and $\mathbf{F}$ are given by
\begin{equation}\label{iCE3}
\mathbf{\hat{G}} = \frac{1}{\sqrt{\tau P_p}} \mathbf{Y}_{\mathrm{rp}} \mathbf{\Phi}^H = \mathbf{G} + \mathbf{Z}_\mathrm{r},
\end{equation}
\begin{equation}\label{iCE4}
\mathbf{\hat{F}} = \frac{1}{\sqrt{\tau P_p}} \mathbf{Y}_{\mathrm{tp}} \mathbf{\Phi}^H = \mathbf{F} + \mathbf{Z}_\mathrm{t},
\end{equation}
respectively, where $\mathbf{Z}_\mathrm{r} {=} \frac{1}{\sqrt{\tau P_p}} \mathbf{Z}_{\mathrm{rp}} \mathbf{\Phi}^H$ and $\mathbf{Z}_\mathrm{t} {=} \frac{1}{\sqrt{\tau P_p}} \mathbf{Z}_{\mathrm{tp}} \mathbf{\Phi}^H$ indicate the relevant estimation error matrices and their entries have zero means and  variances of $\frac{\sigma_{nr}^2}{\tau P_p}$. Apparently, the actual channel matrices $\mathbf{G}$ and $\mathbf{F}$ are independent with the error matrices $\mathbf{Z}_{\mathrm{r}}$ and $\mathbf{Z}_{\mathrm{t}}$, hence the large-scale fading matrices are estimated as
\begin{equation}\label{iCE5}
\mathbf{\hat{D}}_{\mathrm{u}}= \mathbf{D}_{\mathrm{u}} + \frac{\sigma_{nr}^2}{\tau P_p}\mathbf{I}_{\mathrm{2K}},
\end{equation}
\begin{equation}\label{iCE6}
\mathbf{\hat{D}}_{\mathrm{d}}= \mathbf{D}_{\mathrm{d}} + \frac{\sigma_{nr}^2}{\tau P_p}\mathbf{I}_{\mathrm{2K}},
\end{equation}
where the $i$-th diagonal elements of $\mathbf{\hat{D}}_\mathrm{u}$ and $\mathbf{\hat{D}}_\mathrm{d}$ are denoted by $\hat{\beta}_{ui}$ and $\hat{\beta}_{di}$, respectively.

\subsection{Achievable Rate: A Lower Bound}
From (\ref{eq8}), we can obtain the ergodic sum rate of the multi-pair two-way FD relay system with massive MIMO processing  as represented by
\begin{equation}\label{eq10}
C = \frac{T_c-\tau}{T_c} \mathbb{E}\left\{\sum_{k=1}^{2K}\mathrm{log_2}(1+\text{SINR}_k)\right\},
\end{equation}
where $\mathrm{SINR}_k$ denotes the received instantaneous   SINR at the user node $S_k$.

However, it is extremely difficult to derive a closed-form expression of the system capacity from (\ref{eq10}).
Therefore, instead of calculating (\ref{eq10}) directly, we refer to the technique from \cite{howMuchTraining} which is widely used in the regime of massive MIMO \cite{massiveMIMOhowManyAntennas, massivemimo1, lowerBoundR11TWC, lowerBoundR15TIT, lowerBoundR15TWC}.
This technique utilizes the statistical channels to detect the received signals. With this technique, the received signal expression (\ref{eq8}) can be rewritten as
\begin{equation}\label{eq11}
\tilde{y}_k= \alpha\sqrt{P_S}\mathbb{E}\{\mathbf{f}_\mathrm{k}^T\mathbf{W}\mathbf{g}_{\mathrm{k}'}\} x_{k'} + \tilde{z}_k,
\end{equation}
where $\tilde{z}_k$ is defined as  the effective noise at $S_k$, and $\tilde{z}_k$ is given by
\begin{align}\label{eq12}
\tilde{z}_k &\triangleq \alpha\sqrt{P_S}\left(\mathbf{f}_\mathrm{k}^T\mathbf{W}\mathbf{g}_{\mathrm{k}'} - \mathbb{E}\{\mathbf{f}_\mathrm{k}^T\mathbf{W}\mathbf{g}_{\mathrm{k}'}\}\right) x_{k'} \nonumber\\
&+ \alpha\sqrt{P_S}\sum_{j=1\atop j\neq k,k'}^{2K}{\mathbf{f}_\mathrm{k}^T\mathbf{W}\mathbf{g}_{\mathrm{j}}x_{j}}
+ \alpha\mathbf{f}^T_\mathrm{k}\mathbf{WG}_{\mathrm{RR}}\mathbf{\tilde{x}}_\mathrm{R}+\alpha\mathbf{f}^T_\mathrm{k}\mathbf{Wz}_\mathrm{R} \nonumber\\
&+ \sqrt{P_S}\sum_{i\in\mathrm{U}_{k}}{\Omega_{k,i}x_{i}}
+ z_k.
\end{align}

Fortunately, it is easy to verify that the expected desired signal ($\mathbb{E}\{\mathbf{f}_\mathrm{k}^T\mathbf{W}\mathbf{g}_{\mathrm{k}'}\} x_{k'}$) and the effective noise ($\tilde{z}_k$) are uncorrelated. Based on the Theorem 1 in \cite{howMuchTraining} which states that the worst case uncorrelated additive noise is independent Gaussian noise with the same variance in terms of the mutual information, we arrive at an achievable data rate of the system shown as
\begin{equation}\label{eq13}
R = \frac{T_c-\tau}{T_c}\sum_{k=1}^{2K}\mathrm{log_2}(1+\gamma_k).
\end{equation}
where the statistical SINR $\gamma_k$ is given by (\ref{eq14}) based on (\ref{eq11}) and (\ref{eq12}).

\begin{remark}\label{remarkLowerbound}
Since the worst case uncorrelated Gaussian noise property is used to derive $\gamma_k$, it is expected that the rate expression (\ref{eq13}) is a lower bound of the ergodic rate, i.e. ($R\leqslant C$). And it will be demonstrated via the numerical results that the performance gap between the lower bound and the achievable ergodic rate is very small, which verifies that the lower bound is a good predictor of the achievable rate.
\end{remark}

\subsection{An Approximate Rate Expression}
According to \cite{massivemimo2}, the MRC/MRT processing  matrix is given by
\setcounter{equation}{18}
\begin{equation}\label{eq15}
\mathbf{W}=\mathbf{\hat{F}}^*\mathbf{T}\mathbf{\hat{G}}^H = (\mathbf{F} + \mathbf{Z}_\mathrm{t})^*\mathbf{T}(\mathbf{G} + \mathbf{Z}_\mathrm{r})^H,
\end{equation}
where $\mathbf{T}=\mathrm{diag}(\mathbf{T}_1,\mathbf{T}_2,\cdots,\mathbf{T}_\mathrm{K})$ is the diagonal permutation matrix indicating the exchange of information between each user pair, and $\mathbf{T}_\mathrm{m}=[0~1;1~0]$, for any $m=1,2,\cdots,K$.

Substituting (\ref{eq15}) into (\ref{eq5}) and (\ref{eq6}), we have
\begin{align}
&\Delta_1 = N_t\sum_{i=1}^{2K}{ \hat{\beta}_{di}\left(N_r^2\beta_{ui'}^2 + N_r\hat{\beta}_{ui'}\sum_{j=1}^{2K}{ \beta_{uj} } \right) }, \label{eq16}\\
&\Delta_2 = N_tN_r\sum_{i=1}^{2K}{ \hat{\beta}_{di}\hat{\beta}_{ui'} }. \label{eq17}
\end{align}
Equations (\ref{eq16}) and (\ref{eq17}) are proved in Appendix A.
Thus we can obtain the power constraint factor $\alpha$ by substituting (\ref{eq16}) and (\ref{eq17}) into (\ref{eq4}).

In the following theorem, we derive an approximate closed-form expression of the achievable lower bound given by (\ref{eq13}).
\begin{theorem}\label{theorem1}
With a fixed value of $\kappa$, when the number of relay antennas is finite and $N_r\gg 2K$, an approximate closed-form expression for the SINR of user $S_k$ under MRC/MRT processing is represented by
\begin{equation}\label{eq19}
\gamma_k
\approx \frac{N_t}{A_k + {\text{MP}}_k+{\text{LIR}}_k+{\text{NR}}_k+{\text{MU}}_k+{\text{AN}}_k},
\end{equation}
where
\begin{align}
&A_k = \kappa \frac{\hat{\beta}_{uk'}}{\beta_{uk'}} + \frac{\hat{\beta}_{dk}}{\beta_{dk}}, \label{AeffectDeviation}\\
&\text{MP}_k =
\sum_{j=1\atop j\neq k,k'}^{2K} { \left(\kappa\frac{\beta_{uj}\hat{\beta}_{uk'}}{\beta_{uk'}^2}+\frac{\hat{\beta}_{dj'}\beta_{uj}^2}{\beta_{dk}\beta_{uk'}^2}\right) }, \label{eq20} \\
&\text{LIR}_k = \frac{P_R\sigma_{LI}^2}{P_S}\frac{\kappa\hat{\beta}_{uk'}}{\beta_{uk'}^2}, \label{eq21} \\
&\text{NR}_k = \frac{\sigma_{nr}^2}{P_S}\frac{\kappa\hat{\beta}_{uk'}}{\beta_{uk'}^2}, \label{eq22}\\
&{\text{MU}}_k = \frac{1}{\beta_{dk}^2\beta_{uk'}^2}
\Delta_3 \sum_{ i\in\mathrm{U}_{k} }{\sigma_{k,i}^2} , \label{eq23}\\
&{\text{AN}}_k = \frac{\sigma_n^2}{P_S\beta_{dk}^2\beta_{uk'}^2} \Delta_3, \label{eq24}
\end{align}
and\footnote{Considering the power-scaling law: $P_S = E_S/N_r$ and $P_R = E_R/N_t$ where $E_S$ and $E_R$ are fixed, we have  $\Delta_3=\frac{P_S}{P_R}\sum_{i=1}^{2K}{\hat{\beta}_{di}\beta_{ui'}^2} + \frac{\sigma_{nr}^2}{N_rP_R} \sum_{i=1}^{2K}{\hat{\beta}_{di}\hat{\beta}_{ui'}}$.} $\Delta_3=\frac{P_S}{P_R}\sum_{i=1}^{2K}{\hat{\beta}_{di}\beta_{ui'}^2}$.
\end{theorem}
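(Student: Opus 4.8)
The plan is to insert the MRC/MRT relaying matrix (\ref{eq15}) into the five expectation terms that make up the statistical SINR (\ref{eq14}), evaluate each of them in closed form by Gaussian moment identities, and then retain only the terms of leading order in $N_r$ under the stated regime $N_r\gg 2K$ with $\kappa=N_t/N_r$ fixed. The first step is to exploit the pairing structure of the permutation matrix $\mathbf{T}$ to write $\mathbf{W}=\sum_{i=1}^{2K}\hat{\mathbf{f}}_{i'}^{*}\hat{\mathbf{g}}_{i}^{H}$, where $\hat{\mathbf{f}}_i$ and $\hat{\mathbf{g}}_i$ are the $i$-th columns of $\hat{\mathbf{F}}$ and $\hat{\mathbf{G}}$ in (\ref{iCE4}) and (\ref{iCE3}), $i'$ is the pair partner of $i$, and $\hat{\mathbf{g}}_i\sim\mathcal{CN}(\mathbf{0},\hat{\beta}_{ui}\mathbf{I}_{N_r})$, $\hat{\mathbf{f}}_i\sim\mathcal{CN}(\mathbf{0},\hat{\beta}_{di}\mathbf{I}_{N_t})$, which is where the hatted large-scale coefficients enter. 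Because every channel vector and every estimation-error vector is independent and zero-mean, each expectation factorizes into an ``$\mathbf{F}$-side'' factor and a ``$\mathbf{G}$-side'' factor, and only summands in which every Gaussian vector is matched with its own conjugate survive. Throughout I would use $\mathbb{E}\{\|\mathbf{a}\|^2\}=N\sigma_a^2$, $\mathbb{E}\{\|\mathbf{a}\|^4\}=N(N+1)\sigma_a^4$, and $\mathbb{E}\{|\mathbf{a}^H\mathbf{b}|^2\}=N\sigma_a^2\sigma_b^2$ for independent $\mathbf{a},\mathbf{b}$.

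Next I would evaluate the numerator. In $\mathbb{E}\{\mathbf{f}_k^T\mathbf{W}\mathbf{g}_{k'}\}=\sum_i\mathbb{E}\{\mathbf{f}_k^T\hat{\mathbf{f}}_{i'}^{*}\}\,\mathbb{E}\{\hat{\mathbf{g}}_i^H\mathbf{g}_{k'}\}$ only the index $i=k'$ leaves two matched inner products, giving $\mathbb{E}\{\mathbf{f}_k^T\mathbf{W}\mathbf{g}_{k'}\}=N_tN_r\beta_{dk}\beta_{uk'}$ and a desired power of $P_SN_t^2N_r^2\beta_{dk}^2\beta_{uk'}^2$. Dividing numerator and denominator of (\ref{eq14}) by $P_SN_tN_r^2\beta_{dk}^2\beta_{uk'}^2$ produces the factor $N_t$ in the numerator of (\ref{eq19}) and turns each denominator term into the dimensionless quantities $A_k,\text{MP}_k,\dots$.

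I would then treat the four denominator terms in turn. For the variance $\mathbb{V}\text{ar}(\mathbf{f}_k^T\mathbf{W}\mathbf{g}_{k'})$ the $i=k'$ summand dominates; its second moment factors as $\mathbb{E}\{|\mathbf{f}_k^T\hat{\mathbf{f}}_k^{*}|^2\}\,\mathbb{E}\{|\hat{\mathbf{g}}_{k'}^H\mathbf{g}_{k'}|^2\}$, whose $O(N_r^2)$ ``mean-square'' part cancels against $|\mathbb{E}\{\cdot\}|^2$, leaving two pieces of order $N_t^2N_r$ and $N_tN_r^2$ that normalize to $A_k=\kappa\hat{\beta}_{uk'}/\beta_{uk'}+\hat{\beta}_{dk}/\beta_{dk}$ (the $i\neq k'$ summands are $O(N_tN_r)$ and are discarded since $N_r\gg 2K$). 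For the inter-pair interference $\sum_{j\neq k,k'}\mathbb{E}\{|\mathbf{f}_k^T\mathbf{W}\mathbf{g}_j|^2\}$ the leading contributions come from the diagonal summands $i=l$: the index $i=j$ gives coherent $N_r^2$ combining on the $\mathbf{g}$-side (the term $\hat{\beta}_{dj'}\beta_{uj}^2/(\beta_{dk}\beta_{uk'}^2)$), while $i=k'$ gives coherent $N_t^2$ combining on the $\mathbf{f}$-side (the term $\kappa\beta_{uj}\hat{\beta}_{uk'}/\beta_{uk'}^2$), jointly yielding $\text{MP}_k$. The relay loop-interference-and-noise term uses $\mathbb{E}\{\|\mathbf{f}_k^T\mathbf{W}\|_2^2\}\approx N_t^2N_r\beta_{dk}^2\hat{\beta}_{uk'}$ and collapses to $\text{LIR}_k+\text{NR}_k$. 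Finally the inter-user and local-noise term is handled by substituting $\alpha^2$ from (\ref{eq4}) together with the already-proved (\ref{eq16}) and (\ref{eq17}); retaining the leading $N_tN_r^2$ part of $1/\alpha^2$ isolates $\Delta_3$ and delivers $\text{MU}_k+\text{AN}_k$.

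The main obstacle will be the bookkeeping in the fourth-order moments for the variance and inter-pair interference: expanding $\mathbf{W}$ into products of four Gaussian vectors generates many cross terms, and for each index pattern one must decide whether the pairing is \emph{coherent} (contributing at order $N_r^2$ or $N_t^2$) or \emph{incoherent} (contributing only at order $N_r$ or $N_t$), since the two orders populate \emph{different} terms of (\ref{eq19}). A secondary but essential point is justifying the approximation itself: the exact SINR also contains terms of strictly lower order in $N_r$, which are dropped using $N_r\gg 2K$ with $\kappa$ fixed. Collecting the retained dominant terms and performing the final division then yields (\ref{eq19}).
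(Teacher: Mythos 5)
Your proposal is correct and follows essentially the same route as the paper's Appendix B: evaluate each expectation in (\ref{eq14}) for the MRC/MRT matrix (\ref{eq15}) via Gaussian moment identities (the numerator $N_tN_r\beta_{dk}\beta_{uk'}$, the variance, the inter-pair terms $\mathbb{E}\{|\mathbf{f}_k^T\mathbf{W}\mathbf{g}_j|^2\}$, $\mathbb{E}\{\|\mathbf{f}_k^T\mathbf{W}\|_2^2\}$, and $\alpha$ via (\ref{eq16})--(\ref{eq17})), then keep only the highest-order terms in $N_tN_r$ under $N_r\gg 2K$ with $\kappa$ fixed. The only cosmetic difference is that you organize the computation through the column expansion $\mathbf{W}=\sum_i\hat{\mathbf{f}}_{i'}^*\hat{\mathbf{g}}_i^H$ and truncate subleading terms per summand, whereas the paper carries the exact matrix-trace expressions through to the full SINR (\ref{exactSINRofLowerbound}) before approximating --- the resulting leading-order terms are identical.
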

\begin{proof}
See Appendix B.
\end{proof}

Theorem 1 provides an approximate achievable rate expression when the number of relay antennas is large and finite. We observe that the small-scale fading is averaged out and the achievable rate is decided by the large-scale fading coefficients, which is the advantage of using the statistical channels for signal detection. On the other hand, since only the average effective channel $\mathbb{E}\{\mathbf{f}_\mathrm{k}^T\mathbf{W}\mathbf{g}_{\mathrm{k}'}\}$ is utilized for detection, there will be a deviation from the instantaneous channel, which is denoted by $A_k$. In addition, it is easy to discover that $\text{MP}_k$ represents the inter-pair interference; $\text{LIR}_k$ and $\text{NR}_k$ denote LI and noise from the relay, respectively; $\text{MU}_k$ signifies the inter-user interference and self-LI; $\text{AN}_k$ indicates the additive noise at $S_k$.
Furthermore, (\ref{eq19}) indicates that increasing the transmit antenna number of the relay can greatly enhance the sum rate, and approximately logarithmically in very large $N_t$.

Next, we investigate the best relation between $N_t$ and $K$ with which the sum rate will achieve its peak value.

For simplicity of analysis, we consider the case where all large-scale fading coefficients are normalized to be 1, i.e., $\mathbf{D}_\mathrm{u} = \mathbf{D}_\mathrm{d} = \mathbf{I}_{\mathrm{2K}}$. Without loss of generality, consider perfect CSI with no channel estimation error, $\sigma_{nr}^2 = \sigma_n^2$, $\sigma_{LI}^2 = \sigma_{k,i}^2$ ($i\in \mathrm{U}_k$ and $\forall k$), $P_R = KP_S$ and $N_t = N_r$. Then, based on (\ref{eq19}) $\sim$ (\ref{eq24}), the SINR for any user is given by
\begin{equation}\label{sinrNtK1}
\gamma_k = \frac{N_t}{aK - b}, \forall k,
\end{equation}
where $a = 3\sigma_{LI}^2 + 4$ and $b = 2-\frac{3\sigma_n^2}{P_S}$. Let $\xi_S = P_S/\sigma_n^2$, and obviously $\xi_S$ indicates the transmit signal-to-noise ratio (SNR) of users. Thereby, the lower bound in (17) is represented as
\begin{equation}\label{lowerBoundRes1}
R = \frac{T_c-\tau}{T_c} 2K\log_2\left(1+\frac{N_t}{aK-b}\right).
\end{equation}

By taking the first order derivative $R'(K)$ of $R$ with respect to $K$, and letting $R'(K) = 0$, we have
\begin{equation}\label{bestRelationNtK1}
\ln\left( 1+\frac{N_t}{aK-b} \right) = \frac{N_t aK}{(aK-b+N_t)(aK-b)},
\end{equation}
which shows the best relation between $N_t$ and $K$. The ``best relation'' means that the sum rate will achieve its peak value when the number of users $K$ satisfies  (\ref{bestRelationNtK1}) here. However, it's nontrivial to obtain some meaningful insights from (\ref{bestRelationNtK1}). Indeed, in massive MIMO case and when $N_t \gg aK-b$, we can obtain the following expression from (\ref{bestRelationNtK1})
\begin{equation}\label{approxiRelationNtK}
N_t \approx (aK-b)e^\frac{aK}{aK-b}.
\end{equation}
By differentiation with respect to $K$, we have
\begin{equation}\label{differentiationNtK1}
N_t'(K) = \frac{a(aK-2b)}{aK-b} e^{\frac{aK}{aK-b}}.
\end{equation}
Note that $aK-2b = 3\sigma_{LI}^2 K + 4(K-1) + 6/\xi_S > 0$ with $K \geqslant 1$, thus we have $N_t'(K)>0$. Therefore, for satisfying the best relation,  the  required transmit antenna number is increasing with respect to the optimal $K$.

Furthermore, (\ref{approxiRelationNtK}) implies that with fixed $N_t$, the sum rate will increase with the number of user pairs. But when the number of user pairs is larger than the optimal $K$ which satisfies the best relation, the sum rate will decline. And this insight will be verified by the simulation results in Fig. 2.

\begin{remark}\label{remarkICESIC}
About the self-interference
$\alpha\sqrt{P_S}\mathbf{f}_\mathrm{k}^T\mathbf{W}\mathbf{g}_{\mathrm{k}}x_{k}$, when $N_r$ and $N_t$ are very large and based on the law of large numbers (\textit{Lemma 1} in \cite{massivemimo2}), we have
\begin{align}\label{SIice1}
\mathbf{f}_\mathrm{k}^T\mathbf{W}\mathbf{g}_\mathrm{k}
& = \mathbf{f}_\mathrm{k}^T \left[ \sum_{i=1}^{2K}{\left(\mathbf{f}_\mathrm{i}^*+\mathbf{z}_{\mathrm{ti}}^*\right) \left(\mathbf{g}_\mathrm{i'}^H+\mathbf{z}_{\mathrm{ri'}}^H\right)}  \right] \mathbf{g}_\mathrm{k} \nonumber\\
& \approx \left\|\mathbf{f}_\mathrm{k}\right\|_2^2\mathbf{\hat{g}}_\mathrm{k'}^H\mathbf{g}_\mathrm{k}
+\mathbf{f}_\mathrm{k}^T\mathbf{\hat{f}}_\mathrm{k'}^*\left\|\mathbf{g}_\mathrm{k}\right\|_2^2,
\end{align}
where $\mathbf{z}_{\mathrm{ti}}$ and $\mathbf{z}_{\mathrm{ri}}$ are the $i$-th columns of $\mathbf{Z}_{\mathrm{t}}$ and $\mathbf{Z}_{\mathrm{r}}$, respectively.
We see that only the CSI of the user pair ($S_k$, $S_{k'}$) is required for $S_k$ to perform SIC when $N_r$ and $N_t$ are large. In addition, when $\kappa$ is fixed and $N_r\rightarrow\infty$, we get $\frac{\mathbf{f}_\mathrm{k}^T\mathbf{W}\mathbf{g}_\mathrm{k}}{N_tN_r}\rightarrow 0$, while $\frac{\mathbf{f}_\mathrm{k}^T\mathbf{W}\mathbf{g}_\mathrm{k'}}{N_tN_r}\rightarrow \beta_{dk}\beta_{uk'}$, thus SIC is needless when $N_r\rightarrow\infty$.
\end{remark}

\section{Achievable Rate Analysis with composite channel estimation}\label{rateWithCCE}
In the previous section, every user's CSI can be estimated by pilot-based channel estimation at the cost of at least $2K$ pilot symbols, and only ($T_c-2K$) symbols are left for payload transmission. When $T_c$ is small, the achievable data rate would be very little. Motivated by \cite{zhengzhengxiang} in which the scheme, where all users in a cell exploit the same pilot sequence and different cells use orthogonal pilot sequences, is proposed to eliminate the inter-cell interference, we are interested in investigating the performance for our system model when two users in each user pair employ the same pilot sequence and different user pairs adopt orthogonal pilot sequences. With this pilot scheme, the minimum pilot sequence length can be reduced to a half, i.e. only $K$ pilot symbols are required at least. As a result, the relay can only estimate the composite channels for each user pair instead of each user's CSI.

In addition, this pilot scheme was also employed in \cite{Xiaojunzheng}, where the performance  was evaluated for the multi-pair two-way relay system when the number of relay antennas went to infinity. However, \cite{Xiaojunzheng} only evaluated the performance when $T_c$ was little ($T_c=10$ therein), and the performance in the regime of large coherence interval is worth exploring. Besides, only the HD relay and the infinite relay antenna number were considered in \cite{Xiaojunzheng}.

\subsection{Composite Channel Estimation (CCE)}
Assume that all users transmit pilot signals simultaneously and the two users in the $n$-th user pair transmit the same pilot sequence $\sqrt{\tau_c P_p}\phi_{cn} \in \mathbb{C}^{1\times \tau_c}$ ($\phi_{cn}\phi_{cn}^H=1$, $n=1,2,\cdots ,K$), the received signal matrices of the receive and transmit antenna array of the relay are shown as
\begin{align}
&\mathbf{Y}_{\mathrm{rc}}
= \sum_{n=1}^{K}{ \sqrt{\tau_c P_p}\left( \mathbf{g}_{\mathrm{2n-1}}+\mathbf{g}_{\mathrm{2n}} \right) } \phi_{cn} + \mathbf{\bar{Z}}_{\mathrm{rc}} \nonumber\\
&\hspace{5.5mm}=\sqrt{\tau_c P_p}\mathbf{G}_\mathrm{c}\mathbf{\Phi}_\mathrm{c} + \mathbf{\bar{Z}}_{\mathrm{rc}}, \label{cce1}\\
&\mathbf{Y}_{\mathrm{tc}}
=\sqrt{\tau_c P_p}\mathbf{F}_\mathrm{c}\mathbf{\Phi}_\mathrm{c} + \mathbf{\bar{Z}}_{\mathrm{tc}}, \label{cce2}
\end{align}
respectively,
where $\mathbf{\Phi}_\mathrm{c} {=} \left[ \phi_{c1}^T, \phi_{c2}^T, \cdots, \phi_{cK}^T \right]^T {\in} \mathbb{C}^{K\times \tau_c}$ and $\mathbf{\Phi}_\mathrm{c}\mathbf{\Phi}_\mathrm{c}^H {=} \mathbf{I}_\mathrm{K}$ ($\tau_c\geqslant K$).
Let $\mathbf{G}_\mathrm{1} = [\mathbf{g}_\mathrm{1}, \mathbf{g}_\mathrm{3}, \cdots, \mathbf{g}_\mathrm{2K-1}]$, $\mathbf{G}_\mathrm{2} = [\mathbf{g}_\mathrm{2}, \mathbf{g}_\mathrm{4}, \cdots, \mathbf{g}_\mathrm{2K}]$,
 $\mathbf{F}_\mathrm{1} = [\mathbf{f}_\mathrm{1}, \mathbf{f}_\mathrm{3}, \cdots, \mathbf{f}_\mathrm{2K-1}]$ and $\mathbf{F}_\mathrm{2} = [\mathbf{f}_\mathrm{2}, \mathbf{f}_\mathrm{4}, \cdots, \mathbf{f}_\mathrm{2K}]$, thus $\mathbf{G}_\mathrm{c} = \mathbf{G}_\mathrm{1} + \mathbf{G}_\mathrm{2}$ and $\mathbf{F}_\mathrm{c} = \mathbf{F}_\mathrm{1} + \mathbf{F}_\mathrm{2}$. Besides, $\mathbf{\bar{Z}}_{\mathrm{rc}}\in \mathbb{C}^{N_r\times \tau_c}$ and $\mathbf{\bar{Z}}_{\mathrm{tc}}\in \mathbb{C}^{N_t\times \tau_c}$ denote the AWGN matrices  with each element's variance of $\sigma_{nr}^2$.

Then we obtain the LS estimations of $\mathbf{G}_\mathrm{c}$ and $\mathbf{F}_\mathrm{c}$ as
\begin{align}
&\mathbf{\hat{G}}_\mathrm{c} = \frac{1}{\sqrt{\tau_c P_p}} \mathbf{Y}_{\mathrm{rc}} \mathbf{\Phi}_\mathrm{c}^H = \mathbf{G}_\mathrm{c} + \mathbf{Z}_\mathrm{rc}, \label{cc3} \\
&\mathbf{\hat{F}}_\mathrm{c} = \frac{1}{\sqrt{\tau_c P_p}} \mathbf{Y}_{\mathrm{tc}} \mathbf{\Phi}_\mathrm{c}^H = \mathbf{F}_\mathrm{c} + \mathbf{Z}_\mathrm{tc}, \label{cc4}
\end{align}
respectively, where $\mathbf{Z}_\mathrm{rc} = \frac{1}{\sqrt{\tau_c P_p}} \mathbf{\bar{Z}}_{\mathrm{rc}} \mathbf{\Phi}_\mathrm{c}^H$ and $\mathbf{Z}_\mathrm{tc} = \frac{1}{\sqrt{\tau_c P_p}} \mathbf{\bar{Z}}_{\mathrm{tc}} \mathbf{\Phi}_\mathrm{c}^H$ signify the error matrices and their elements are all $\mathcal{CN} (0, \frac{\sigma_{nr}^2}{\tau_c P_p})$ random variables.
We observe that $\mathbf{G}_\mathrm{c}$, $\mathbf{Z}_\mathrm{rc}$,
$\mathbf{F}_\mathrm{c}$ and $\mathbf{Z}_\mathrm{tc}$ are pairwise independent.
Besides, we can easily get that
\begin{align}
&\mathbb{E}\left[\mathbf{G}_\mathrm{c}^H\mathbf{G}_\mathrm{c}\right] = N_r\left(\mathbf{D}_{\mathrm{u1}} + \mathbf{D}_{\mathrm{u2}}\right), \label{cc5}\\
&\mathbb{E}\left[\mathbf{F}_\mathrm{c}^H\mathbf{F}_\mathrm{c}\right] = N_r\left(\mathbf{D}_{\mathrm{d1}} + \mathbf{D}_{\mathrm{d2}}\right), \label{cc6}
\end{align}
where
$\mathbf{D}_{\mathrm{u1}} = \mathrm{diag}\left[\beta_{u1}, \beta_{u3}, \cdots, \beta_{u(2K-1)}\right]$,
$\mathbf{D}_{\mathrm{u2}} = \mathrm{diag}\left[\beta_{u2}, \beta_{u4}, \cdots, \beta_{u(2K)}\right]$,
$\mathbf{D}_{\mathrm{d1}} = \mathrm{diag}\left[\beta_{d1}, \beta_{d3}, \cdots, \beta_{d(2K-1)}\right]$, and
$\mathbf{D}_{\mathrm{d2}} = \mathrm{diag}\left[\beta_{d2}, \beta_{d4}, \cdots, \beta_{d(2K)}\right]$.
Therefore, the covariance matrices of the rows of $\mathbf{\hat{G}}_\mathrm{c}$ and $\mathbf{\hat{F}}_\mathrm{c}$ are denoted as
\begin{align}
&\mathbf{\hat{D}}_{\mathrm{uc}} = \frac{\mathbb{E}\left[\mathbf{\hat{G}}_\mathrm{c}^H\mathbf{\hat{G}}_\mathrm{c}\right]}
{N_r}
=\mathbf{D}_{\mathrm{u1}}+\mathbf{D}_{\mathrm{u2}}+\frac{\sigma_{nr}^2}{\tau_c P_p} \mathbf{I}_{\mathrm{K}}, \label{cc7}\\
&\mathbf{\hat{D}}_{\mathrm{dc}} = \frac{\mathbb{E}\left[\mathbf{\hat{F}}_\mathrm{c}^H\mathbf{\hat{F}}_\mathrm{c}\right]}
{N_t}
=\mathbf{D}_{\mathrm{d1}}+\mathbf{D}_{\mathrm{d2}}+\frac{\sigma_{nr}^2}{\tau_c P_p} \mathbf{I}_{\mathrm{K}}, \label{cc8}
\end{align}
where the $n$-th diagonal elements of $\mathbf{\hat{D}}_{\mathrm{uc}}$ and $\mathbf{\hat{D}}_{\mathrm{dc}}$ are represented as $\hat{\beta}_{ucn}$ and $\hat{\beta}_{dcn}$, respectively.

With respect to the training length, \cite{howMuchTraining} shows that the optimal training length equals the minimum possible, i.e., $\tau=2K$ and $\tau_c=K$, assuming that the training power and data power can vary. However, when the training power and the data power are equal and very low, the optimal number of training symbols may be larger.
Without loss of generality, we use $\tau_c=\frac{1}{2}\tau$ in the following.

\begin{corollary}\label{cor1}
When $\tau_c = \frac{1}{2}\tau$, based on (\ref{iCE5}), (\ref{iCE6}) and (\ref{cc7}), (\ref{cc8}), we can easily get that
\begin{align}
&\hat{\beta}_{ucn} = \hat{\beta}_{u(2n-1)} + \hat{\beta}_{u(2n)}, \label{cc9} \\
&\hat{\beta}_{dcn} = \hat{\beta}_{d(2n-1)} + \hat{\beta}_{d(2n)}, \label{cc10}
\end{align}
for any user pair $n$ ($n=1,2,\cdots,K$).
\end{corollary}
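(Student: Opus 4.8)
The plan is to collapse both matrix identities to their scalar diagonal entries and then match the two additive error terms. First I would read off from (\ref{iCE5}) and (\ref{iCE6}) the per-user estimates $\hat{\beta}_{ui} = \beta_{ui} + \frac{\sigma_{nr}^2}{\tau P_p}$ and $\hat{\beta}_{di} = \beta_{di} + \frac{\sigma_{nr}^2}{\tau P_p}$ for every index $i$, and from (\ref{cc7}), (\ref{cc8}) the per-pair composite estimates $\hat{\beta}_{ucn} = \beta_{u(2n-1)} + \beta_{u(2n)} + \frac{\sigma_{nr}^2}{\tau_c P_p}$ and $\hat{\beta}_{dcn} = \beta_{d(2n-1)} + \beta_{d(2n)} + \frac{\sigma_{nr}^2}{\tau_c P_p}$. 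Here I would use the definitions of $\mathbf{D}_{\mathrm{u1}}, \mathbf{D}_{\mathrm{u2}}$ (and their downlink counterparts $\mathbf{D}_{\mathrm{d1}}, \mathbf{D}_{\mathrm{d2}}$) to identify the $n$-th diagonal entry of $\mathbf{D}_{\mathrm{u1}}+\mathbf{D}_{\mathrm{u2}}$ as $\beta_{u(2n-1)}+\beta_{u(2n)}$, and likewise for the downlink.

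Next I would add the two relevant per-user estimates, obtaining $\hat{\beta}_{u(2n-1)} + \hat{\beta}_{u(2n)} = \beta_{u(2n-1)} + \beta_{u(2n)} + \frac{2\sigma_{nr}^2}{\tau P_p}$ and the analogous downlink sum. The only remaining point is to verify that the inflated noise variance of the composite estimate equals exactly the sum of the two individual noise variances. This is precisely where the hypothesis $\tau_c = \frac{1}{2}\tau$ enters: it yields $\frac{\sigma_{nr}^2}{\tau_c P_p} = \frac{2\sigma_{nr}^2}{\tau P_p}$, so the composite error term coincides with the doubled individual error term. Substituting this identity back gives $\hat{\beta}_{ucn} = \hat{\beta}_{u(2n-1)} + \hat{\beta}_{u(2n)}$ and $\hat{\beta}_{dcn} = \hat{\beta}_{d(2n-1)} + \hat{\beta}_{d(2n)}$, which are exactly (\ref{cc9}) and (\ref{cc10}).

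There is no genuine obstacle here; the argument is purely scalar bookkeeping, consistent with the corollary's phrasing that the result is ``easily'' obtained. The single substantive observation worth stating explicitly — and the reason the relation is an exact additive identity rather than an approximation or inequality — is that halving the training length exactly doubles the LS noise variance of each estimate. Consequently, the extra variance incurred by estimating one composite channel from $K$ pilot symbols matches the combined variance of the two corresponding separate individual estimates built from $2K$ symbols, making the true large-scale terms and the noise terms align simultaneously.
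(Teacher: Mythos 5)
Your proposal is correct and follows exactly the route the paper intends: the paper gives no explicit proof beyond ``we can easily get,'' and your scalar bookkeeping --- reading $\hat{\beta}_{ui}=\beta_{ui}+\frac{\sigma_{nr}^2}{\tau P_p}$, $\hat{\beta}_{ucn}=\beta_{u(2n-1)}+\beta_{u(2n)}+\frac{\sigma_{nr}^2}{\tau_c P_p}$ (and the downlink analogues) off the cited diagonal identities, then using $\tau_c=\frac{1}{2}\tau$ to equate $\frac{\sigma_{nr}^2}{\tau_c P_p}$ with $\frac{2\sigma_{nr}^2}{\tau P_p}$ --- is precisely the intended substitution. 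Your closing observation, that halving the training length doubles the LS error variance so the composite noise term exactly matches the sum of the two individual ones, correctly identifies why the relation is an exact identity rather than an approximation.
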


\subsection{Achievable Rate with CCE}
With the estimated composite channels, the relay takes the following MRC/MRT matrix
\begin{equation}\label{arwcce1}
\mathbf{W}_\mathrm{c} = \mathbf{\hat{F}}_\mathrm{c}^*\mathbf{\hat{G}}_\mathrm{c}^H
=\left(\mathbf{F}_\mathrm{c}+\mathbf{Z}_\mathrm{tc}\right)^*
\left(\mathbf{G}_\mathrm{c}+\mathbf{Z}_\mathrm{rc}\right)^H.
\end{equation}

Similar to (\ref{eq16}) and (\ref{eq17}), substituting (\ref{arwcce1}) into (\ref{eq5}) and (\ref{eq6}), we get
\begin{align}
&\Delta_1 {=} N_t\sum_{n=1}^{K}{ \hat{\beta}_{dn}\left[N_r^2(\beta_{u(2n{-}1)}^2{+}\beta_{u(2n)}^2) {+} N_r\hat{\beta}_{un}\sum_{j=1}^{2K}{ \beta_{uj} } \right] }, \label{arwcce2}\\
&\Delta_2 = N_tN_r\sum_{n=1}^{K}{ \hat{\beta}_{dn}\hat{\beta}_{un} }. \label{arwcce3}
\end{align}
Then the power limiting factor $\alpha$ with CCE is achieved by substituting (\ref{arwcce2}) and (\ref{arwcce3}) into (\ref{eq4}).

\begin{theorem}\label{theorem2}
Without loss of generality, consider user $S_k$ ($k=2m-1$) in user pair $m$. When $\kappa$ is fixed and $N_r\gg 2K$, the SINR of user $S_k$ for a finite number of relay antennas under CCE is approximated as
\begin{equation}\label{arwcce4}
\gamma_k^c \approx  \frac{N_t}{A_k^c + {\text{MP}}_k^c+{\text{LIR}}_k^c+{\text{NR}}_k^c+{\text{MU}}_k^c+{\text{AN}}_k^c},
\end{equation}
where
\begin{align}
&A_k^c = \kappa \frac{\hat{\beta}_{ucm}}{\beta_{u(2m)}} + \frac{\hat{\beta}_{dcm}}{\beta_{d(2m-1)}}, \label{arwcce5}\\
&\text{MP}_k^c =
\sum_{n=1, n\neq m}^{K}  \kappa\frac{\hat{\beta}_{ucm}(\beta_{u(2n-1)}+\beta_{u(2n)})}{\beta_{u(2m)}^2}
\nonumber\\
&\hspace{20mm}+\sum_{n=1, n\neq m}^{K}
\frac{\hat{\beta}_{dcn}(\beta_{u(2n-1)}^2+\beta_{u(2n)}^2)}{\beta_{d(2m-1)}\beta_{u(2m)}^2}, \label{arwcce6} \\
&\text{LIR}_k^c = \frac{P_R\sigma_{LI}^2}{P_S}\frac{\kappa\hat{\beta}_{ucm}}{\beta_{u(2m)}^2}, \label{arwcce7} \\
&\text{NR}_k^c = \frac{\sigma_{nr}^2}{P_S}\frac{\kappa\hat{\beta}_{ucm}}{\beta_{u(2m)}^2}, \label{arwcce8} \\
&{\text{MU}}_k^c = \frac{1}{\beta_{d(2m-1)}^2\beta_{u(2m)}^2}
\Delta_3^c \sum_{i\in \text{U}_k}{\sigma_{k,i}^2} , \label{arwcce9}
\end{align}
\begin{align}
&{\text{AN}}_k^c = \frac{\sigma_n^2}{P_S\beta_{d(2m-1)}^2\beta_{u(2m)}^2} \Delta_3^c ,\label{arwcce10}
\end{align}
and\footnote{Similar to the fourth footnote, considering the power-scaling law, we have $\Delta_3^c{=}\frac{P_S}{P_R}\sum\limits_{n=1}^{K}{\hat{\beta}_{dcn}(\beta_{u(2n{-}1)}^2{+}\beta_{u(2n)}^2)} {+} \frac{\sigma_{nr}^2}{N_rP_R} \sum\limits_{n=1}^{K}{\hat{\beta}_{dcn}\hat{\beta}_{ucn}}$.}
$\Delta_3^c = \frac{P_S}{P_R}\sum\limits_{n=1}^{K}{\hat{\beta}_{dcn}(\beta_{u(2n{-}1)}^2 + \beta_{u(2n)}^2)}$.
\end{theorem}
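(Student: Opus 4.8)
The plan is to follow the same route as the proof of Theorem~\ref{theorem1} (Appendix~B), simply replacing the individual MRC/MRT matrix $\mathbf{W}$ by the composite one $\mathbf{W}_\mathrm{c}=\hat{\mathbf{F}}_\mathrm{c}^*\hat{\mathbf{G}}_\mathrm{c}^H$ of (\ref{arwcce1}) and using the composite second-order statistics (\ref{cc7})--(\ref{cc8}) in place of (\ref{iCE5})--(\ref{iCE6}). Concretely, I would evaluate each moment appearing in the statistical SINR (\ref{eq14}): the expected effective channel $\mathbb{E}\{\mathbf{f}_\mathrm{k}^T\mathbf{W}_\mathrm{c}\mathbf{g}_{\mathrm{k}'}\}$, its variance $\mathbb{V}\text{ar}(\mathbf{f}_\mathrm{k}^T\mathbf{W}_\mathrm{c}\mathbf{g}_{\mathrm{k}'})$, the inter-pair interference $\sum_{j\neq k,k'}\mathbb{E}\{|\mathbf{f}_\mathrm{k}^T\mathbf{W}_\mathrm{c}\mathbf{g}_{\mathrm{j}}|^2\}$, the relay-forwarded term $\mathbb{E}\{\|\mathbf{f}_\mathrm{k}^T\mathbf{W}_\mathrm{c}\|_2^2\}$, and the scaling $1/\alpha^2$ obtained from $\Delta_1$ and $\Delta_2$. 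The decisive structural fact I would exploit throughout is that $\mathbf{G}_\mathrm{c}$, $\mathbf{Z}_\mathrm{rc}$, $\mathbf{F}_\mathrm{c}$, $\mathbf{Z}_\mathrm{tc}$ are pairwise independent, so that expectations factorize over the $\mathbf{f}$- and $\mathbf{g}$-blocks.

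For the effective channel I would write $\mathbf{W}_\mathrm{c}=\sum_{n=1}^{K}\hat{\mathbf{f}}_{\mathrm{c}n}^*\hat{\mathbf{g}}_{\mathrm{c}n}^H$ and observe that, for $S_k$ with $k=2m-1$ and $k'=2m$, the only term with nonzero mean is $n=m$, because $\hat{\mathbf{f}}_{\mathrm{c}m}$ is the sole column containing $\mathbf{f}_{2m-1}$ and $\hat{\mathbf{g}}_{\mathrm{c}m}$ the sole column containing $\mathbf{g}_{2m}$; this yields $\mathbb{E}\{\mathbf{f}_\mathrm{k}^T\mathbf{W}_\mathrm{c}\mathbf{g}_{\mathrm{k}'}\}\approx N_tN_r\beta_{d(2m-1)}\beta_{u(2m)}$, whose squared magnitude, after dividing numerator and denominator by the common factor $P_SN_tN_r^2\beta_{d(2m-1)}^2\beta_{u(2m)}^2$, leaves the single $N_t$ in the numerator of (\ref{arwcce4}). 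The variance, interference, and relay terms then follow by expanding the relevant quadratic and quartic forms and applying the standard Gaussian moment identities (\emph{Lemma~1} of \cite{massivemimo2}), keeping only the dominant powers of $N_r$ under $N_r\gg 2K$ with $\kappa$ fixed. Finally I would substitute $\Delta_1$, $\Delta_2$ from (\ref{arwcce2})--(\ref{arwcce3}) into $\alpha$ of (\ref{eq4}), retain the leading term to define $\Delta_3^c$, and collect the six contributions exactly as in (\ref{eq19})--(\ref{eq24}). Each composite coefficient $\hat{\beta}_{ucm}$, $\hat{\beta}_{dcm}$, $\beta_{u(2m)}$, $\beta_{d(2m-1)}$ then replaces its individual counterpart $\hat{\beta}_{uk'}$, $\hat{\beta}_{dk}$, $\beta_{uk'}$, $\beta_{dk}$ of Theorem~\ref{theorem1}.

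The main obstacle --- and the only place where the CCE analysis genuinely departs from the ICE one --- is the extra correlation induced by pilot sharing: the beamformer $\hat{\mathbf{g}}_{\mathrm{c}n}$ (resp.\ $\hat{\mathbf{f}}_{\mathrm{c}n}$) is now correlated with \emph{both} channels $\mathbf{g}_{2n-1},\mathbf{g}_{2n}$ of pair $n$, so each interfering pair feeds the target through two distinct mechanisms. In computing $\sum_{j\neq k,k'}\mathbb{E}\{|\mathbf{f}_\mathrm{k}^T\mathbf{W}_\mathrm{c}\mathbf{g}_{\mathrm{j}}|^2\}$ for an interferer in pair $n\neq m$, the $n'=m$ cross-term (coherent in $\mathbf{f}_\mathrm{k}$, incoherent in $\mathbf{g}$) contributes a factor proportional to $\hat{\beta}_{ucm}(\beta_{u(2n-1)}+\beta_{u(2n)})$, while the $n'=n$ cross-term (coherent in $\mathbf{g}$, incoherent in $\mathbf{f}$) contributes one proportional to $\hat{\beta}_{dcn}(\beta_{u(2n-1)}^2+\beta_{u(2n)}^2)$; these are exactly the two sums of (\ref{arwcce6}), and the same quadratic pairing produces the $(\beta_{u(2n-1)}^2+\beta_{u(2n)}^2)$ structure inside $\Delta_3^c$. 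The delicate bookkeeping is therefore to verify that no further cross-pair terms survive at leading order in $N_r$ and that the surviving ones collapse to precisely these symmetric pair-sums.
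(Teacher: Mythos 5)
Your proposal takes essentially the same approach as the paper, whose entire stated proof of this theorem is that it is ``similar with Theorem \ref{theorem1}'': you correctly plan to redo the Appendix B moment calculations with $\mathbf{W}_\mathrm{c}$ from (\ref{arwcce1}) and the composite statistics (\ref{cc7})--(\ref{cc8}), and you correctly isolate the one genuinely new ingredient, namely the pilot-sharing correlation that makes each estimated column coherent with both channels of its pair, which is exactly what produces the two symmetric pair-sums in (\ref{arwcce6}) and the $(\beta_{u(2n-1)}^2+\beta_{u(2n)}^2)$ structure in $\Delta_3^c$. The argument is sound and complete relative to the paper's conventions (in particular, self-interference is excluded from (\ref{eq14}) by the SIC assumption, handled separately in Remark \ref{remarkCCESIC}).
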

\begin{proof}
The proof is similar with Theorem 1.
\end{proof}

We observe that the SINR of user $S_k$ with CCE is similar to that with ICE. By comparing (\ref{arwcce4}) with (\ref{eq19}) and based on (\ref{cc9}) and (\ref{cc10}), we obtain $\gamma_k^c < \gamma_k$. Particularly, consider that the system is symmetric\footnote{It is known that the large-scale fading is closely related to the distance, hence symmetry here can be interpreted as the same distance from the two users in each user pair to the relay.}, i.e. $\mathbf{D}_{\mathrm{u1}}=\mathbf{D}_{\mathrm{u2}}$ and  $\mathbf{D}_{\mathrm{d1}}=\mathbf{D}_{\mathrm{d2}}$, then we can easily obtain $\gamma_k^c =\frac{1}{2}\gamma_k$, $\forall k$.

\begin{remark}\label{remarkCCESIC}
As to the self-interference under CCE, according to the law of large numbers (\textit{Lemma 1} in \cite{massivemimo2}) and in the regime of very large $N_r$ and $N_t$, we get
\begin{align}\label{SIcce1}
\mathbf{f}_\mathrm{k}^T\mathbf{W}_\mathrm{c}\mathbf{g}_\mathrm{k}
& {=} \mathbf{f}_\mathrm{k}^T  \sum_{n=1}^{K}{\left(\mathbf{f}_\mathrm{2n-1}^*{+}\mathbf{f}_\mathrm{2n}^*{+}\mathbf{z}_{\mathrm{tcn}}^*\right) \left(\mathbf{g}_\mathrm{2n-1}^H{+}\mathbf{g}_\mathrm{2n}^H{+}\mathbf{z}_{\mathrm{rcn}}^H\right)}   \mathbf{g}_\mathrm{k} \nonumber\\
& \approx \left\|\mathbf{f}_\mathrm{k}\right\|_2^2\left\|\mathbf{g}_\mathrm{k}\right\|_2^2
\approx N_tN_r\beta_{dk}\beta_{uk},
\end{align}
where $\mathbf{z}_{\mathrm{tcn}}$ and $\mathbf{z}_{\mathrm{rcn}}$ are the $n$-th columns of $\mathbf{Z}_{\mathrm{tc}}$ and $\mathbf{Z}_{\mathrm{rc}}$, respectively.
Note that the individual CSI for each user cannot be acquired under CCE. But based on the law of large numbers, we can approximate the self-interference $\alpha\sqrt{P_S}\mathbf{f}_\mathrm{k}^T\mathbf{W}_\mathrm{c}\mathbf{g}_\mathrm{k}x_k$ as $\alpha\sqrt{P_S}N_tN_r\beta_{dk}\beta_{uk}x_k$. It is meant that the self-interference is only related to the large-scale fading coefficients and then can be cancelled out.
\end{remark}

\section{Performance Evaluation}\label{performanceEvaluation}
In this section, we evaluate the system performance with different pilot schemes. First, we analytically compare the performance under ICE with that under CCE. Then, the power control of the users and the relay is derived based on sum rate maximization and max-min fairness criterion, respectively.

\subsection{Performance Comparison Between ICE and CCE}
The previous analysis shows that in the symmetric system ($\mathbf{D}_{\mathrm{u1}}=\mathbf{D}_{\mathrm{u2}}$,  $\mathbf{D}_{\mathrm{d1}}=\mathbf{D}_{\mathrm{d2}}$),
we have $\gamma_k^c=\frac{1}{2}\gamma_k$, $\forall k$. Then we can obtain the following corollary.
\begin{corollary}\label{cor2}
Consider the symmetric traffic and $\tau_c = \frac{1}{2}\tau$.
Let $T_c^E=\left( 1+\frac{1}{2(g-1)} \right)\tau$, where $g=\frac{ \sum_{k=1}^{2K} \log_2(1+\gamma_k)}{ \sum_{k=1}^{2K} \log_2(1+\frac{1}{2}\gamma_k)}$. The coherence interval $T_c^E$ satisfies $R^c=R$, where $R^c$ denotes the sum rate of the system with CCE. Moreover, we have
\begin{itemize}
  \item $R^c>R$, when $T_c\in( \tau, T_c^E )$;
  \item $R^c<R$, when $T_c\in( T_c^E, \infty )$.
\end{itemize}
\end{corollary}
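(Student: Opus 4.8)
The plan is to treat this as a direct algebraic comparison of the two closed-form sum rates, since in the symmetric case the relation $\gamma_k^c = \frac{1}{2}\gamma_k$ established just before the corollary collapses everything to a single scalar. First I would write the two sum rates explicitly. Using the achievable-rate form (\ref{eq13}) with pilot lengths $\tau$ and $\tau_c = \frac{1}{2}\tau$, and abbreviating $A = \sum_{k=1}^{2K}\log_2(1+\gamma_k)$ and $B = \sum_{k=1}^{2K}\log_2(1+\frac{1}{2}\gamma_k)$, we have
\begin{equation}
R = \frac{T_c-\tau}{T_c}A, \qquad R^c = \frac{T_c-\tau/2}{T_c}B.
\end{equation}
Because $\gamma_k > 0$ forces $\log_2(1+\gamma_k) > \log_2(1+\frac{1}{2}\gamma_k)$ term by term, we get $A > B > 0$, hence $g = A/B > 1$; this single strict inequality is what drives every sign conclusion below.

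Next I would solve the break-even equation $R^c = R$. Clearing the common denominator $T_c$ gives $(T_c-\tau/2)B = (T_c-\tau)A$, and substituting $A = gB$ and cancelling $B>0$ yields $T_c(1-g) = \tau(\frac{1}{2}-g)$, i.e.
\begin{equation}
T_c^E = \tau\,\frac{g-\frac{1}{2}}{g-1} = \tau\,\frac{2g-1}{2(g-1)} = \left(1+\frac{1}{2(g-1)}\right)\tau,
\end{equation}
which is exactly the claimed expression. Note that $g>1$ guarantees $T_c^E > \tau$, so the break-even point indeed lies in the admissible range $T_c > \tau$ over which the rates are defined.

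Finally I would settle the sign of $R^c - R$ everywhere. Collecting terms,
\begin{equation}
R^c - R = \frac{1}{T_c}\left[-(A-B)\,T_c + \tau\Big(A-\tfrac{1}{2}B\Big)\right],
\end{equation}
where $A-B>0$ and $A-\frac{1}{2}B>0$. Since $T_c>0$, the sign of $R^c-R$ is that of the bracketed expression, a strictly decreasing affine function of $T_c$ (slope $-(A-B)<0$) that vanishes precisely at $T_c = \tau\frac{A-\frac{1}{2}B}{A-B} = T_c^E$. Therefore $R^c > R$ for $T_c \in (\tau, T_c^E)$ and $R^c < R$ for $T_c \in (T_c^E,\infty)$, as stated. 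The whole argument is elementary once $\gamma_k^c = \frac{1}{2}\gamma_k$ is invoked; the only step requiring genuine care is keeping the inequality directions straight, and all of them reduce to the fact $g>1$ established at the outset, so I expect no real obstacle beyond this bookkeeping.
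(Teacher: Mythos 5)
Your proof is correct and takes essentially the same route as the paper's: both reduce everything to a scalar comparison via $\gamma_k^c=\frac{1}{2}\gamma_k$, establish $g>1$, solve the break-even equation to get $T_c^E=\left(1+\frac{1}{2(g-1)}\right)\tau$, and settle the sign by monotonicity in $T_c$. The only cosmetic difference is that you analyze the difference $R^c-R$ as a strictly decreasing affine function of $T_c$ while the paper analyzes the ratio $f(T_c)=R^c/R$ as a monotonically decreasing function (and additionally proves $g<2$, giving $T_c^E>\frac{3}{2}\tau$, a refinement your argument correctly shows is unnecessary for the stated claims).
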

\begin{proof}
In the symmetric system, when $\tau_c=\frac{1}{2}\tau$, we have
\begin{equation}\label{proofCor2}
\frac{R^c}{R}
=\frac{T_c-\frac{1}{2}\tau}{T_c-\tau} \cdot \left(\frac{ \sum_{k=1}^{2K} \log_2(1+\gamma_k)}{ \sum_{k=1}^{2K} \log_2(1+\frac{1}{2}\gamma_k)}\right)^{-1}.
\end{equation}
Let $g=\frac{ \sum_{k=1}^{2K} \log_2(1+\gamma_k)}{ \sum_{k=1}^{2K} \log_2(1+\frac{1}{2}\gamma_k)}$ ($\gamma_k>0$, $\forall k$). Evidently, we have $g>1$.  In addition, it can be easily proved that $\frac{1}{2}\log_2(1+\gamma_k)<\log_2(1+\frac{1}{2}\gamma_k)$, $\forall k$, therefore, we obtain $g\in(1,2)$.

Besides, let $f(T_c) = \frac{R^c}{R} = \frac{T_c-\frac{1}{2}\tau}{g\left(T_c-\tau\right)}$. We can easily show that $f(T_c)$ is a monotonically decreasing function of $T_c$. Let $f(T_c) = 1$, then we get the solution of $T_c$:
$T_c^E = \left( 1+\frac{1}{2(g-1)} \right)\tau$. Since $g\in(1,2)$, we have $T_c^E>\frac{3}{2}\tau >\tau$.
Based on the monotonically decreasing property of $f(T_c)$, we have $f(T_c)>1$ if $\tau<T_c<T_c^E$, and $f(T_c)<1$ if $T_c>T_c^E$. Then the proof is completed.
\end{proof}

Corollary \ref{cor2} shows that the CCE scheme performs better in the scenario where the coherence interval is smaller than a certain value. Otherwise, the ICE scheme is preferable.

In addition, we rewrite (\ref{eq19}) as $\gamma_k = \theta_k N_t$, where $\theta_k \approx \frac{1}{A_k + {\text{MP}}_k+{\text{LIR}}_k+{\text{NR}}_k+{\text{MU}}_k+{\text{AN}}_k}$. Then we have the following corollary concerning the relation between $T_c^E$ and $N_t$.
\begin{corollary}\label{relationTcENt1}
When the number of relay antennas is very large such that $\theta_kN_t \gg 2$ for any user, $T_c^E$ is increasing with respect to $N_t$ in an approximately logarithmic way.
\end{corollary}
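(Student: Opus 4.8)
The plan is to make the dependence of $T_c^E$ on $N_t$ explicit through the single quantity $g$. Since $T_c^E = \left(1 + \frac{1}{2(g-1)}\right)\tau$ and $g \in (1,2)$, the map $g \mapsto T_c^E$ is strictly decreasing, so it suffices to understand how $g$ varies with $N_t$; and the only place $N_t$ enters $g$ is through $\gamma_k = \theta_k N_t$ in every term of the numerator and denominator.

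First I would use the hypothesis $\theta_k N_t \gg 2$, i.e. $\gamma_k \gg 2$ for every user, to replace each logarithm by its dominant term,
\begin{align*}
\log_2(1+\gamma_k) &\approx \log_2 \gamma_k, \\
\log_2\!\left(1+\tfrac{1}{2}\gamma_k\right) &\approx \log_2 \gamma_k - 1.
\end{align*}
Writing $L \triangleq \sum_{k=1}^{2K}\log_2\gamma_k$, each of the $2K$ users contributes a $-1$ to the denominator of $g$, so $g \approx \frac{L}{L - 2K}$, hence $g - 1 \approx \frac{2K}{L-2K}$ and $\frac{1}{2(g-1)} \approx \frac{L-2K}{4K}$. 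Since $\gamma_k = \theta_k N_t$, we have $L = 2K\log_2 N_t + C_0$ with $C_0 \triangleq \sum_{k=1}^{2K}\log_2\theta_k$ independent of $N_t$. Substituting into $T_c^E$ gives
\[
T_c^E \approx \tau\left( \frac{1}{2}\log_2 N_t + \frac{1}{2} + \frac{C_0}{4K} \right),
\]
which is manifestly increasing in $N_t$ and grows like $\frac{\tau}{2}\log_2 N_t$, i.e. approximately logarithmically, establishing the claim.

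The hard part is justifying the large-$N_t$ approximation rather than merely writing it down: one must check that the neglected corrections $\log_2(1+\gamma_k)-\log_2\gamma_k = \log_2(1+1/\gamma_k)$ and the analogous term in the denominator are uniformly small under $\gamma_k \gg 2$, and that these $O(1/\gamma_k)$ errors, summed over the $2K$ users, do not perturb the leading $2K\log_2 N_t$ behaviour of $L$. For strict (rather than asymptotic) monotonicity one can instead argue directly that $g$ is decreasing: each per-user ratio $\log_2(1+\gamma_k)/\log_2(1+\tfrac{1}{2}\gamma_k)$ decreases from $2$ toward $1$ as $\gamma_k = \theta_k N_t$ grows, which drives $g$ toward $1$ and hence $T_c^E$ upward; this is immediate when the users are statistically identical, while the asymptotic expansion above handles the general case and simultaneously pins down the logarithmic rate.
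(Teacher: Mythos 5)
Your proposal is correct and follows essentially the same route as the paper's proof: both use the hypothesis $\theta_k N_t \gg 2$ to replace $\log_2(1+\gamma_k)$ and $\log_2\bigl(1+\tfrac{1}{2}\gamma_k\bigr)$ by $\log_2(\theta_k N_t)$ and $\log_2(\theta_k N_t)-1$, arriving at $g \approx 1 + \bigl(\log_2 N_t + \tfrac{1}{2K}\sum_{k}\log_2\theta_k - 1\bigr)^{-1}$ and then invoking the monotone decreasing map $g \mapsto T_c^E$ to conclude approximately logarithmic growth. Your explicit formula $T_c^E \approx \tau\bigl(\tfrac{1}{2}\log_2 N_t + \tfrac{1}{2} + \tfrac{C_0}{4K}\bigr)$ is simply the paper's conclusion carried one algebraic step further (pinning down the rate constant $\tau/2$), and your closing remarks on error control and per-user monotonicity are sensible refinements rather than a different argument.
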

\begin{proof}
Consider $N_t$ as the argument and keep other system parameters fixed, such as the transmit powers, large-scale fading and the interference levels, thus all $\theta_k$ ($\forall k$) are constant and positive values. Then we have
\begin{equation}\label{gResponse1}
g(N_t)=\frac{ \sum_{k=1}^{2K} \log_2(1+\theta_k N_t)}{ \sum_{k=1}^{2K} \log_2(1+\frac{1}{2}\theta_k N_t)}.
\end{equation}
When $N_t$ is very large such that $\theta_kN_t \gg 2$, $\forall k$, we get
\begin{align}\label{approxgNt1}
g(N_t) &\approx \frac{ \sum_{k=1}^{2K} \log_2(\theta_k N_t)}{ \sum_{k=1}^{2K} \log_2(\frac{1}{2}\theta_k N_t)} \nonumber\\
& = 1 + \frac{1}{\log_2{N_t} + \left(\sum_{k=1}^{2K} \log_2\theta_k \right) / (2K) - 1},
\end{align}
which shows that $g$ is logarithmically decreasing with $N_t$ when $N_t$ is very large.
In addition, $T_c^E$ is decreasing with $g$. As a result, we obtain that $T_c^E$ is increasing with respect to $N_t$ in an approximately logarithmic way in massive MIMO.
\end{proof}

\begin{remark}\label{remarkComplexityanalysis}
(\textit{Complexity Analysis}) Since both the two channel estimation schemes adopt the LS estimation approach, the difference of complexity between the two schemes only stays in the channel estimation stage (i.e., in (\ref{iCE3}), (\ref{iCE4}) and (\ref{cc3}), (\ref{cc4})) and the computing stage of the MRC/MRT matrix (i.e., in (\ref{eq15}) and (\ref{arwcce1})), thus we only present the complexity analysis of these two stages. Thereby, the time complexity of ICE scheme is given by $\mathcal{O}(2KN_tN_r) + \mathcal{O}(2K\tau (N_t+N_r))$, and the time complexity of CCE scheme is shown as $\mathcal{O}(KN_tN_r) + \mathcal{O}(K\tau_c (N_t+N_r))$. In particular, when $\tau_c = \frac{1}{2}\tau = K$, the ICE scheme has an extra time complexity of $\mathcal{O}(KN_tN_r) + \mathcal{O}(3K^2 (N_t+N_r))$.
\end{remark}

\subsection{Power Control}
Without loss of generality, we only present the power allocation of the system under ICE. And we consider the fixed pilot power $P_p$.
First, different transmit powers are optimally allocated to different users and the relay in order to obtain the maximal achievable sum rate. Then we address the power allocation problem for maximizing the minimum SINR of all the users. In the end, based on the max-min fairness criterion, we discuss a special scenario where all large-scale fading coefficients are set to be the same.

\subsubsection{Sum Rate Maximization} Assuming different users have different transmit powers and user $S_i$ takes the transmit power $P_i$. Then we can obtain the approximate SINR of $S_k$ using the similar way as that of Theorem \ref{theorem1}.
\begin{theorem}\label{theorem3}
When $\kappa$ is fixed and finite $N_r$ satisfies $N_r\gg 2K$, the approximate SINR of user $S_k$ under different user powers is given by
\begin{align}\label{plofdiffuserP1}
\gamma_k
&\approx \frac{1}{f_k(P_1,P_2,\cdots,P_{2K},P_R)} \nonumber\\
&=\frac{P_{k'} N_t}{ {\overline{\text{MP}}}_k+{\overline{\text{LIR}}}_k+{\overline{\text{NR}}}_k+{\overline{\text{MU}}}_k+{\overline{\text{AN}}}_k},
\end{align}
where
\begin{align}
&\overline{\text{MP}}_k =
\sum_{j=1, j\neq k}^{2K} {P_j a_{k,j} }, \label{plofdiffuserP2} \\
&\overline{\text{LIR}}_k = P_R \sigma_{LI}^2 b_k, ~~~ \overline{\text{NR}}_k = \sigma_{nr}^2 b_k, \label{plofdiffuserP3} \\
&{\overline{\text{MU}}}_k =
 \frac{1}{P_R}\sum_{i=1}^{2K}{P_{i} c_{k,i}}
\sum_{ i\in\mathrm{U}_{k} }{P_i \sigma_{k,i}^2} , \label{plofdiffuserP5}\\
&{\overline{\text{AN}}}_k =
\frac{\sigma_n^2}{P_R}\sum_{i=1}^{2K}{P_{i} c_{k,i}} . \label{plofdiffuserP6}
\end{align}
and $a_{k,j}=\kappa\frac{\beta_{uj}\hat{\beta}_{uk'}}{\beta_{uk'}^2}
+\frac{\hat{\beta}_{dj'}\beta_{uj}^2}{\beta_{dk}\beta_{uk'}^2}$,
$b_k =  \frac{\kappa\hat{\beta}_{uk'}}{\beta_{uk'}^2}$,
$c_{k,i} = \frac{\hat{\beta}_{di'}\beta_{ui}^2}{\beta_{dk}^2\beta_{uk'}^2}$.
\end{theorem}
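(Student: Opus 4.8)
The plan is to replay the derivation of Theorem~\ref{theorem1} (Appendix B) almost verbatim, with the common transmit power $P_S$ promoted to per-user powers $P_i$, and then to read off where each $P_i$ lands. First I would rewrite the received signal (\ref{eq8}) with $\sqrt{P_S}$ replaced by $\sqrt{P_{k'}}$ on the desired term $\mathbf{f}_\mathrm{k}^T\mathbf{W}\mathbf{g}_{\mathrm{k}'}x_{k'}$, by $\sqrt{P_j}$ on each inter-pair term $\mathbf{f}_\mathrm{k}^T\mathbf{W}\mathbf{g}_{\mathrm{j}}x_{j}$, and by $\sqrt{P_i}$ on each inter-user/self-LI term $\Omega_{k,i}x_i$. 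Applying the same worst-case uncorrelated-Gaussian argument of \cite{howMuchTraining} that produced (\ref{eq13})--(\ref{eq14}), the statistical SINR becomes the power-weighted version of (\ref{eq14}): the numerator carries $P_{k'}|\mathbb{E}\{\mathbf{f}_\mathrm{k}^T\mathbf{W}\mathbf{g}_{\mathrm{k}'}\}|^2$, the variance term carries $P_{k'}$, the $j$-th inter-pair term carries $P_j$, the relay LI/noise term is unchanged, and the last term becomes $(\sum_{i\in\mathrm{U}_k}P_i\sigma_{k,i}^2+\sigma_n^2)/\alpha^2$.

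Next I would recompute the power-normalization factor $\alpha$. Since the relay now forwards $\mathbf{G}\mathbf{P}^{1/2}\mathbf{x}$ with $\mathbf{P}=\mathrm{diag}(P_1,\ldots,P_{2K})$, the constraint $\mathbb{E}\{\mathrm{Tr}[\mathbf{x}_\mathrm{R}\mathbf{x}_\mathrm{R}^H]\}=P_R$ yields $\alpha^{-2}=\big(\sum_{j=1}^{2K}P_j\Delta_{1,j}+(P_R\sigma_{LI}^2+\sigma_{nr}^2)\Delta_2\big)/P_R$, where $\Delta_{1,j}=\mathbb{E}[\mathrm{Tr}(\mathbf{W}\mathbf{g}_\mathrm{j}\mathbf{g}_\mathrm{j}^H\mathbf{W}^H)]$ is exactly the per-$j$ summand already evaluated in Appendix A, whose dominant contribution is $N_tN_r^2\hat{\beta}_{dj'}\beta_{uj}^2$. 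This is the only place where the derivation genuinely differs from Theorem~\ref{theorem1}: the single scalar $P_S\Delta_1$ splits into the weighted sum $\sum_j P_j\Delta_{1,j}$, and it is this weighted sum that will propagate into the inter-user-interference and additive-noise terms through the $1/\alpha^2$ factor.

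The moments $\mathbb{E}\{\mathbf{f}_\mathrm{k}^T\mathbf{W}\mathbf{g}_{\mathrm{k}'}\}$, $\mathbb{V}\text{ar}(\mathbf{f}_\mathrm{k}^T\mathbf{W}\mathbf{g}_{\mathrm{k}'})$, $\mathbb{E}\{|\mathbf{f}_\mathrm{k}^T\mathbf{W}\mathbf{g}_{\mathrm{j}}|^2\}$ and $\mathbb{E}\{\|\mathbf{f}_\mathrm{k}^T\mathbf{W}\|_2^2\}$ depend only on the channel statistics, not on the transmit powers, so I would import them unchanged from Appendix B, together with $\mathbb{E}\{\mathbf{f}_\mathrm{k}^T\mathbf{W}\mathbf{g}_{\mathrm{k}'}\}\approx N_tN_r\beta_{dk}\beta_{uk'}$. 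Substituting these moments and the new $1/\alpha^2$, and retaining only the leading $N_r^2$ and $N_rN_t$ contributions under $N_r\gg 2K$ and fixed $\kappa$, I would write the signal power as $P_{k'}N_t\cdot N_tN_r^2\beta_{dk}^2\beta_{uk'}^2$ and divide every denominator term by the common factor $N_tN_r^2\beta_{dk}^2\beta_{uk'}^2$. The numerator then reduces to $P_{k'}N_t$, while each denominator term collapses to the stated coefficients $a_{k,j}$, $b_k$, $c_{k,i}$; in particular the normalized $1/\alpha^2$ factor becomes precisely $\frac{1}{P_R}\sum_i P_i c_{k,i}$, the common prefactor shared by $\overline{\text{MU}}_k$ and $\overline{\text{AN}}_k$.

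The one subtlety is bookkeeping rather than difficulty: because the desired-signal power $P_{k'}$ no longer cancels (unlike the common $P_S$ of Theorem~\ref{theorem1}), it must survive in the numerator, and the variance/effective-deviation term no longer stands alone. A direct check gives $a_{k,k'}=\kappa\hat{\beta}_{uk'}/\beta_{uk'}+\hat{\beta}_{dk}/\beta_{dk}=A_k$, so the old $A_k$ term is exactly the $j=k'$ summand of the inter-pair contribution, which is why $\overline{\text{MP}}_k$ is summed over $j\neq k$ rather than $j\neq k,k'$. The main obstacle is therefore simply to route every $P_i$ to its correct slot through the split $1/\alpha^2$ factor; as a consistency check I would confirm that setting $P_i\equiv P_S$ recovers (\ref{eq19})--(\ref{eq24}).
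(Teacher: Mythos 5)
Your proposal is correct and follows exactly the route the paper intends: the paper offers no separate proof of Theorem~3, stating only that it is obtained ``using the similar way as that of Theorem~1,'' and your derivation fills in precisely that path --- importing the channel moments of Appendix~B unchanged, splitting $P_S\Delta_1$ into the weighted sum $\sum_j P_j \Delta_{1,j}$ inside $\alpha$, and routing each $P_i$ to its slot. Your bookkeeping check that $a_{k,k'}=\kappa\hat{\beta}_{uk'}/\beta_{uk'}+\hat{\beta}_{dk}/\beta_{dk}=A_k$, so that the variance term is absorbed as the $j=k'$ summand of $\overline{\text{MP}}_k$, together with the sanity check that $P_i\equiv P_S$ recovers (\ref{eq19})--(\ref{eq24}), confirms the result exactly as stated.
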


Our goal of power allocation is to maximize the sum rate, and this optimization problem is formulated as
\begin{align}
&\max_{\{P_i,P_R,\gamma_k\}} ~ \prod_{k=1}^{2K} (1+\gamma_k) \nonumber\\
&\hspace{1mm}\mathrm{s.t.} ~~ \gamma_k \cdot f_k(P_1,P_2,\cdots,P_{2K},P_R) \leqslant 1, \forall k, \nonumber\\
&\hspace{9mm}0\leqslant P_i \leqslant P_S^{max}, \forall i, \nonumber\\
&\hspace{9mm}0\leqslant P_R \leqslant P_R^{max},  \label{optbro1}
\end{align}
where $P_S^{max}$ and $P_R^{max}$ are  peak power constraints of the users and the relay, respectively.
Note that we use the inequality constraints in (\ref{optbro1}) to replace the equality constraints in (\ref{plofdiffuserP1}) \cite{approximationOnePlusR}, and the optimal solution of (\ref{optbro1}) will satisfy (\ref{plofdiffuserP1}) since the objective function increases with each $\gamma_k$.

Equation (\ref{plofdiffuserP1}) indicates that the function $f_k(P_1,P_2,\cdots,P_{2K},P_R)$ is a posynomial, then the problem (\ref{optbro1}) would be a geometric program (GP) if the objective function was a monomial. To solve a GP which can be converted to convex form, we can use CVX (a convex optimization tool) \cite{cvx}. Based on the Lemma 1 in \cite{approximationOnePlusR}, we can approximate $1+\gamma_k$ as
\begin{equation}\label{approOnePlusR}
1+\gamma_k \approx \lambda_k \gamma_k^{\nu_k}
\end{equation}
near an arbitrary point $\hat{\gamma}_k>0$, where $\nu_k = \hat{\gamma}_k(1+\hat{\gamma}_k)^{-1}$ and $\lambda_k = \hat{\gamma}_k^{-\nu_k}(1+\hat{\gamma}_k)$. Note that $1+\gamma_k = \lambda_k \gamma_k^{\nu_k}$ if and only if $\gamma_k=\hat{\gamma}_k$, and $1+\gamma_k > \lambda_k \gamma_k^{\nu_k}$ otherwise \cite{approximationOnePlusR}.
Hence, by making a guess of $\hat{\gamma}_k$, (\ref{optbro1}) can be approximated as
\begin{align}
&\max_{\{P_i,P_R,\gamma_k\}} ~ \prod_{k=1}^{2K} \lambda_k \gamma_k^{\nu_k} \nonumber\\
&\hspace{1mm}\mathrm{s.t.} ~~ \gamma_k \cdot f_k(P_1,P_2,\cdots,P_{2K},P_R) \leqslant 1, \forall k, \nonumber\\
&\hspace{9mm} 0\leqslant P_i \leqslant P_S^{max}, \forall i, \nonumber\\
&\hspace{9mm} 0\leqslant P_R \leqslant P_R^{max}.   \label{optbro2}
\end{align}

We can see that problem (\ref{optbro2}) is a GP and can be solved by CVX.
Note that  (\ref{optbro2}) should be solved several times to refine the solution by updating the last solution of $\gamma_k$ as $\hat{\gamma}_k$.  The iteration algorithm for solving (\ref{optbro1}) is formulated as follows\footnote{Note that the approximation (\ref{approOnePlusR}) requires $\hat{\gamma}_k>0$, thus we will stop the iteration if zero arises in the solution of $\gamma_k$.}.
\begin{algorithm}[htb]
\caption{: Successive approximation algorithm for (\ref{optbro1}).}
\label{algorithm1}
\begin{algorithmic}[1]                
\STATE \textit{Initialization}. Set the iteration number $i=1$. Choose a feasible power allocation, such as $P_k=P_S^{max}$, $\forall k$ and $P_R = KP_S^{max}$. The initial guess of $\hat{\gamma}_{k,i}$ for all $k$ is determined by (\ref{plofdiffuserP1}). Besides, set a convergence judgement parameter $\epsilon>0$.  \label{initialization}
\STATE \textit{Iteration i}. Solve the GP (\ref{optbro2}) using CVX. The solution of $\gamma_k$ is denoted as $\gamma_k^{\ast}$, $\forall k$. \label{IterationI}
\STATE \textit{Zero judgement}. If there exists any $k$ satisfying $\gamma_k^*=0$, stop. Otherwise, go to step \ref{convergenceJudge}. \label{zeroJudge}
\STATE \textit{Convergence judgement}. If $\max_{k} | \hat{\gamma}_{k,i} - \gamma_k^* | \leqslant \epsilon$, stop. Otherwise, go to step \ref{returnStep}. \label{convergenceJudge}
\STATE Set $i=i+1$, $\hat{\gamma}_{k,i} = \gamma_k^*$, $\forall k$, then go to step 2. \label{returnStep}
\end{algorithmic}
\end{algorithm}

Similar to the Theorem 1 in \cite{approximationOnePlusR}, it is easy to obtain
\begin{align}\label{monoIncreasing}
\prod_{k=1}^{2K} (1+\hat{\gamma}_{k,i})
&= \prod_{k=1}^{2K} \lambda_{k,i}\hat{\gamma}_{k,i}^{\nu_{k,i}}
\leqslant \prod_{k=1}^{2K} \lambda_{k,i}\hat{\gamma}_{k,(i+1)}^{\nu_{k,i}} \nonumber\\
&\leqslant \prod_{k=1}^{2K} (1+\hat{\gamma}_{k,(i+1)}),
\end{align}
where the first inequality follows from the fact that $\hat{\gamma}_{k,(i+1)}$ is the solution of (\ref{optbro2}) after the $i$-th iteration, and the second inequality follows from $1+\gamma_k\geqslant \lambda_k \gamma_k^{\nu_k}$. Thus Algorithm \ref{algorithm1} is monotonically increasing with the iteration number\footnote{In practice, we always stop the algorithm after a few iterations before it converges. The increasing property guarantees the effectiveness of the algorithm.}.

\subsubsection{Max-Min Fairness Criterion}
On the other hand, since each user receives data from its partner, the rate performance of the worst user with the minimum SINR represents the system performance in some sense. As a result, based on the max-min fairness design criterion, the goal of power allocation can be maximization of the minimum SINR of the users. This optimization problem is formulated as follows.
\begin{align}
\max_{\{P_i, P_R\}} &~ \min_{\forall k} ~~  \frac{1}{f_k(P_1,P_2,\cdots,P_{2K},P_R)} \nonumber\\
&\hspace{1mm}\mathrm{s.t.} ~~  0\leqslant P_i \leqslant P_S^{max}, \forall i, \nonumber\\
&\hspace{8.2mm} 0\leqslant P_R \leqslant P_R^{max}.  \label{optmaxminprob1}
\end{align}

Similar to the technique in \cite{zhengzhengxiang}, a slack variable $t$ is introduced to (\ref{optmaxminprob1}) and the problem is reformulated as
\begin{align}
&\max_{\{P_i, P_R, t \}} ~ t   \nonumber\\
&\hspace{1mm}\mathrm{s.t.} ~~ t\cdot f_k(P_1,P_2,\cdots,P_{2K},P_R) \leqslant 1, \forall k, \nonumber\\
&\hspace{8.2mm} 0\leqslant P_i \leqslant P_S^{max}, \forall i, \nonumber\\
&\hspace{8.2mm} 0\leqslant P_R \leqslant P_R^{max}.  \label{optmaxminprob3}
\end{align}

We observe that the optimization problem (\ref{optmaxminprob3}) is a GP and can be solved using CVX.

\subsubsection{A Special Scenario}
In this special scenario, we assume that all large-scale fading coefficients are the same (i.e., $\mathbf{D}_\mathrm{u}=\mathbf{D}_\mathrm{d}=\beta \mathbf{I}_{\mathrm{2K}}$, there would be no large-scale fading if $\beta=1$).
Without loss of generality, we consider equal self-LI levels and equal inter-user interference levels, i.e., $\sigma_{k,k}^2=\sigma_{LI}^2$, $\sigma_{k,i}^2=\sigma_{IU}^2$, $\forall k$ and $i\neq k$.
From (\ref{plofdiffuserP2}) $\sim$ (\ref{plofdiffuserP6}), we obtain $a_{k,j}=(\kappa+1)\frac{\hat{\beta}}{\beta}$, $b_k= \frac{\kappa\hat{\beta}}{\beta^2}$, $c_{k,i}=\frac{\hat{\beta}}{\beta^2}$, $\forall k,j,i$. Then if all users transmit the same power ($P_i=P_S$, $\forall i$), the SINRs for all users will be equal,  which satisfies the condition of the max-min fairness criterion.

With all users having the same power, (\ref{eq21}) $\sim$ (\ref{eq24}) are re-expressed as
\begin{align}
&{\text{LIR}}_k = \frac{P_R}{P_S} \cdot \sigma_{LI}^2 b, ~~~ \text{NR}_k = \frac{1}{P_S}\sigma_{nr}^2 b, \label{eq25}\\
&{\text{MU}}_k =  \frac{P_S}{P_R} \cdot 2K^2 \mu \delta, \label{eq26}\\
&{\text{AN}}_k =  \frac{1}{P_R} \cdot 2K\sigma_n^2 \mu, \label{eq27}
\end{align}
where $b {=} \frac{\kappa\hat{\beta}}{\beta^2}$,
$\mu {=} \frac{\hat{\beta}}{\beta^2}$ and
$\delta {=} \frac{\sigma_{LI}^2 + (K-1)\sigma_{IU}^2}{K}$. Accordingly,
the max-min problem is equivalent to the following minimization problem:
\begin{align}
&\min_{\{ P_S, P_R \}} ~ f_k(P_S, P_R) = \text{LIR}_k + \text{NR}_k + \text{MU}_k + \text{AN}_k \nonumber\\
&\hspace{4mm}\mathrm{s.t.} ~~ 0 \leqslant P_S\leqslant P_S^{max}, P_R \geqslant 0,  \label{eq28}
\end{align}
where $P_S^{max}$ is the peak power constraint of $P_S$.
By solving (\ref{eq28}), we obtain the following corollary.
\begin{corollary}\label{corPowerControl1}
Let $\eta {=} K \sqrt{ \frac{2 \delta}{\sigma_{LI}^2 \kappa} } $. When $K\delta P_S^{max} {\gg} \sigma_n^2$, the optimal value of the max-min problem for the special case is achieved when $P_S=P_S^{max}$ and $P_R \approx \eta P_S^{max}$.
\end{corollary}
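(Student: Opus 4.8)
The plan is to exploit the multiplicative structure of the objective. Since in this special scenario all users share a common SINR $\gamma_k=N_t/f_k$, the max--min problem is equivalent to the minimization (\ref{eq28}) of $f_k(P_S,P_R)=\text{LIR}_k+\text{NR}_k+\text{MU}_k+\text{AN}_k$. First I would substitute $b=\kappa\mu$ (which follows at once from $b=\kappa\hat\beta/\beta^2$ and $\mu=\hat\beta/\beta^2$) into (\ref{eq25})--(\ref{eq27}), so that the objective reads
\begin{equation*}
f_k(P_S,P_R)=\frac{\sigma_{LI}^2\kappa\mu}{P_S}P_R+\frac{\sigma_{nr}^2\kappa\mu}{P_S}+\left(2K^2\mu\delta P_S+2K\sigma_n^2\mu\right)\frac{1}{P_R}.
\end{equation*}
The key point is that for fixed $P_S$ this is of the form $AP_R+B/P_R+C$ with $A,B,C>0$, so the inner minimization over $P_R$ is a textbook AM--GM problem, and the overall minimum can be found by $\min_{P_S,P_R}f_k=\min_{P_S}\min_{P_R}f_k$.

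Next I would perform the inner minimization. By AM--GM, $AP_R+B/P_R\geqslant 2\sqrt{AB}$ with equality at $P_R^\star(P_S)=\sqrt{B/A}$, which is feasible since (\ref{eq28}) imposes only $P_R\geqslant0$. Inserting $A=\sigma_{LI}^2\kappa\mu/P_S$ and $B=2K\mu(K\delta P_S+\sigma_n^2)$ gives
\begin{equation*}
P_R^\star(P_S)=\sqrt{\frac{P_S\left(2K^2\delta P_S+2K\sigma_n^2\right)}{\sigma_{LI}^2\kappa}},
\end{equation*}
and the partially minimized objective becomes
\begin{equation*}
\tilde f_k(P_S)=2\mu\sqrt{\sigma_{LI}^2\kappa\left(2K^2\delta+\frac{2K\sigma_n^2}{P_S}\right)}+\frac{\sigma_{nr}^2\kappa\mu}{P_S}.
\end{equation*}

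I would then minimize $\tilde f_k$ over $0\leqslant P_S\leqslant P_S^{max}$. The decisive observation is that every occurrence of $P_S$ in $\tilde f_k$ enters through a term $1/P_S$ with a positive coefficient, so $\tilde f_k$ is strictly decreasing in $P_S$; hence its minimum on $(0,P_S^{max}]$ is attained at the boundary $P_S=P_S^{max}$, which establishes the first half of the claim with no differentiation required. Finally I would invoke the stated regime $K\delta P_S^{max}\gg\sigma_n^2$, equivalently $2K^2\delta P_S^{max}\gg 2K\sigma_n^2$, to drop the $2K\sigma_n^2$ term inside $P_R^\star(P_S^{max})$, giving $P_R^\star\approx P_S^{max}K\sqrt{2\delta/(\sigma_{LI}^2\kappa)}=\eta P_S^{max}$, exactly the asserted $\eta$.

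The AM--GM step and the monotonicity argument are both routine, so I do not anticipate a genuine obstacle. The only points demanding a little care are \emph{(i)} justifying that sequential minimization (first over $P_R$, then over $P_S$) indeed yields the joint minimum, and \emph{(ii)} confirming that the error incurred in the last approximation is controlled precisely by the hypothesis $K\delta P_S^{max}\gg\sigma_n^2$. The main bookkeeping subtlety is simply carrying the identity $b=\kappa\mu$ correctly through the algebra so that the $\sigma_{LI}^2$ and $\kappa$ factors recombine into $\eta$ as stated.
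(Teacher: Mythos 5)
Your proposal is correct and is essentially the paper's own argument made explicit: the paper's terse appeal to ``the properties of binary functions'' is precisely your sequential minimization (AM--GM over $P_R$ for fixed $P_S$, then monotonicity in $P_S$ forcing $P_S=P_S^{max}$), and your $P_R^\star(P_S^{max})$ coincides, after substituting $b=\kappa\mu$, with the paper's $P_R=\sqrt{\frac{2K\mu P_S^{max}}{\sigma_{LI}^2 b}\left(K\delta P_S^{max}+\sigma_n^2\right)}$, from which the stated $\eta P_S^{max}$ follows under $K\delta P_S^{max}\gg\sigma_n^2$. The two points you flag are indeed harmless: sequential minimization is automatically valid here because the feasible set is a product set (and the inner minimum is attained since $f_k\to\infty$ as $P_R\to 0^+$ or $P_R\to\infty$).
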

\begin{proof}
Equation (\ref{eq28}) is the minimization problem about a binary function. Thus by utilizing the properties of binary functions, we can get that the optimal value of (\ref{eq28}) is achieved when
$P_S=P_S^{max}$ and
$P_R=\sqrt{\frac{2K\mu P_S^{max}}{\sigma_{LI}^2 b} \left(K\delta P_S^{max}+\sigma_n^2\right) }$. Then we can obtain Corollary \ref{corPowerControl1}.
\end{proof}

Corollary \ref{corPowerControl1} indicates the simple power allocation based on max-min fairness criterion when all large-scale fading coefficients are the same. In addition, we observe that the variable $\eta$ is decided by interference levels ($\sigma_{LI}^2$ and $\sigma_{IU}^2$), while it has nothing to do with the large-scale fading $\beta$. Furthermore, the numerical results by operating Algorithm 1 can show that in this special case, the sum rate optimization criterion yields the same power allocation results as that of max-min criterion.

\begin{remark}\label{interferenceAnalysis}
(\textit{Interference Analysis}) From (7), we know that there are mainly four types of interference in the system, in which inter-pair interference and inter-user interference are due to the multi-pair consideration, and LI from the relay and self-LI are caused by the full-duplex operation. (22) shows that all these interferences are harmful to the SINR. With respect to their impacts on the sum rate, we know from (24) and (27) that inter-pair and inter-user interferences increase with the number of user pairs $K$. However, the previous analysis indicates that the sum rate will achieve its maximum with an optimal $K$ for a fixed $N_t$. Thus the sum rate will first increase with $K$ though the inter-pair and inter-user interferences also increase since the multiplexing gain is larger than the interference effect, and then the sum rate will decrease with $K$ when $K$ is larger than its optimal value due to the opposite reason.
In addition, the power allocation result of Corollary 4 implies that there is a nearly linear relation between $P_R$ and $P_S$ for the special scenario when the sum rate achieve its peak value.
Therefore, for LI from the relay and under fixed $P_S$, the sum rate will first increase and then decrease with $P_R$ while the LI from the relay keeps increasing with $P_R$, and the case of self-LI under fixed $P_R$ is similar. Moreover, if both $P_R$ and $P_S$ increase according to their optimal relation in Corollary 4, it can be easily inferred from (22) $\sim$ (28) that the sum rate will increase though both the interferences also increase.

From (22), we see that the SINR will increase significantly with $N_t$ due to the large array gain of massive MIMO. However, the denominator in (22) indicates that the interferences keep constant regardless of the increase of antenna number. Therefore, by enhancing the received power of the desired signal and the relative value SINR, it seems that massive MIMO is able to suppress all types of interference.
\end{remark}

\section{Numerical Results}\label{numericalResults}
In this section, the numerical results for the proposed system are illustrated, as well as the comparison of our scheme with other schemes.
We set $N_t = N_r$, the coherence interval $T_c=100$ symbols\footnote{Note that $T_c$ is inversely proportional to the Doppler spread, thus its value is large for cases of low mobility (for example, the relay is fixed geographically) and small for cases of high mobility (for example, a mobile terminal serves as the relay), and we set $T_c=100$ for cases of low mobility. Besides, both cases of high and low mobilities are considered in Fig. 4, where we set $2K<T_c\leqslant 100$.}
and the number of user pairs $K=5$ unless otherwise specified. The noise is normalized to be $\sigma^2_n=\sigma_{nr}^2=1$. In addition, the inter-user interference level is also set to be 1, i.e. $\sigma_{k,i}^2=1$ ($i\in \mathrm{U}_k, i\neq k$), and we choose the LI level $\sigma_{LI}^2/\sigma_n^2 = \sigma_{k,k}^2/\sigma_n^2 = 5$ dB ($\forall k$)  unless otherwise specified\footnote{It is reasonable that the LI is stronger than the inter-user interference, since the inter-user interference will undergo path loss and shadow fading while LI doesn't. Besides, the user power is generally not very high, therefore, the inter-user interference becomes very small after the fading. In addition, we also consider different LI levels in Fig. 3, where the LI is set to be 1 dB, 5 dB or 10 dB.}.
Besides, the training length is set to be $\tau = 2K$ for ICE and $\tau_c = K$ for CCE and we set $P_p/\sigma_n^2 = P_S/\sigma_n^2 = 10$ dB\footnote{We set the training lengths of ICE and CCE schemes to be their minimums, to render the data transmissions more symbols of the coherence interval. Besides, when power optimization is not applied, without loss of generality, the training power and the data power are set to be the same. In addition, \cite{howMuchTraining} indicates that when the training power and the data power are equal, the smallest training length may not be optimal if the power is very low. On the other hand, the power can't be very large otherwise very strong LI would be created.}.

\subsection{Sum Rate with Statistical CSI}
In this subsection, we only consider the case of ICE, and the large-scale fading coefficients are normalized to be 1, i.e. $\mathbf{D}_\mathrm{u} = \mathbf{D}_\mathrm{d} = \mathbf{I}_{\mathrm{2K}}$.

\begin{figure}
\centering
\includegraphics[width=2.5in,height=1.9in]{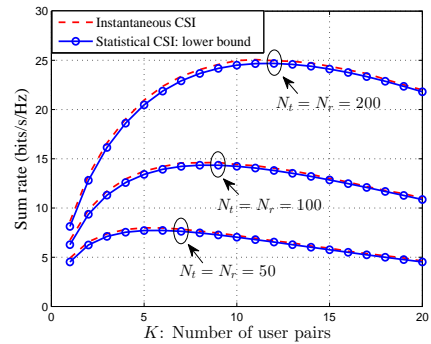}
\caption{Sum rate vs. $K$ under different $N_t$ ($P_R = \eta P_S$).}\label{SRvsKcompairInstansAndStatis}
\end{figure}

\begin{figure}
\centering
\includegraphics[width=2.5in,height=1.9in]{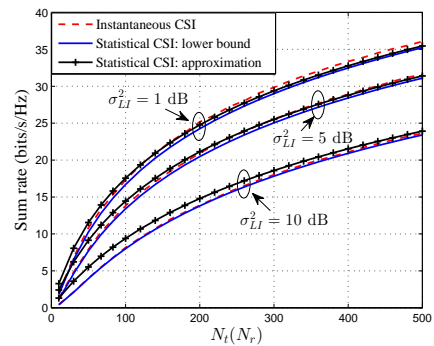}
\caption{Sum rate vs. $N_t$ under different LI level ($P_R = \eta P_S$).}\label{SRvsNtwithdiffLI}
\end{figure}

Fig. \ref{SRvsKcompairInstansAndStatis} describes the sum rate of the system vs. $K$ ($1\leqslant K\leqslant 20$) under different number of relay antennas. We set $P_R = \eta P_S$ based on Corollary \ref{corPowerControl1}. The ``Statistical CSI: lower bound'' curve is generated from (\ref{eq13}), where the statistical distributions of the channels are used to detect the desired signals. The ergodic sum rate given by (\ref{eq10}) is also presented in Fig. \ref{SRvsKcompairInstansAndStatis} by using Monte Carlo simulation, via the ``Instantaneous CSI'' curve. We can observe that the proposed rate expression (\ref{eq13}) is a lower bound of the ergodic rate, and the performance gap between them is very small, even when the relay antenna number is not so large, such as $N_t=N_r=50$, verifying that using statistical channels for signal detection in massive MIMO systems is quite feasible.
Besides, we see that the sum rate of a $K$-pair ($K>1$) system is less than $K$ times the sum rate of a one-pair system, because the multi-pair system has more interferences (inter-pair and inter-user interferences) than the one-pair system. We also observe that the sum rate increases with the increase of $K$, since the multiplexing gain is larger than the interference effect. However, when $K$ is very large, the inter-pair and inter-user interferences dominate the system performance, and hence the sum rate will decrease in very large $K$ (such as $K>5$ when $N_t=50$). Furthermore, Fig. \ref{SRvsKcompairInstansAndStatis}  verifies that adding the number of relay antennas can significantly improve the system performance.

In Fig. \ref{SRvsNtwithdiffLI}, we compare the approximate sum rate given by Theorem \ref{theorem1} with the lower bound given by (\ref{eq13}) with $P_R=\eta P_S$ and $\sigma_{LI}^2=\sigma_{k,k}^2$, $\forall k$.
First, we also observe that the performance gap between the lower bound and the ergodic sum rate  is very small under different LI levels, regardless of the relay antenna number. Then it is seen
that Theorem 1 is a very close approximation to the proposed rate bound, especially when the number of relay antennas is very large. Thus Theorem 1 is a good predictor for the ergodic sum rate in the case of large antennas. However, the derived approximated results match not so well with the ergodic sum rates when not so many antennas are used, especially for the case of large LI, for example, the approximate result is 0.85 bits/s/Hz higher than the ergodic sum rate for $N_t=N_r=10$ and $\sigma_{LI}^2=10$ dB.
Furthermore, the high LI level decreases the sum rate significantly. In order to achieve the same sum rate in higher LI level, more antennas can be employed at the relay to suppress the LI.

\begin{figure}
\centering
\includegraphics[width=2.5in,height=1.9in]{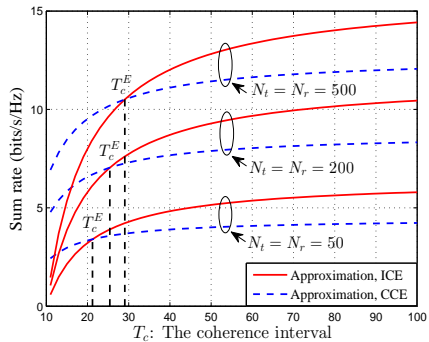}
\caption{Sum rate vs. $T_c$ in symmetric system with different number of relay antennas ($P_R = K P_S$).}\label{compairICEwithCCE-SRvsTc}
\end{figure}

\begin{figure}
\centering
\includegraphics[width=2.5in,height=1.9in]{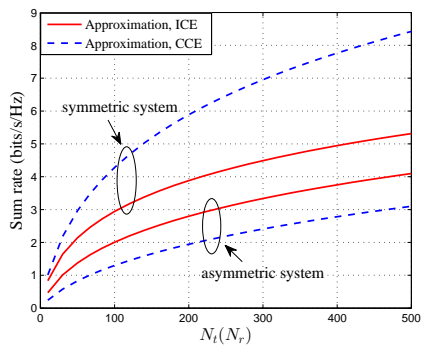}
\caption{Sum rate vs. $N_t$ in symmetric and asymmetric systems ($P_R = K P_S$, $T_c = 15$).}\label{compairICEwithCCE-SRvsNt}
\end{figure}
\subsection{Comparison between ICE and CCE}
In this subsection, the sum rate performance under ICE is compared with that under CCE. The path loss and shadow fading are taken into account with regard to the large-scale fading coefficients. Besides, it is reasonable to assume that $\mathbf{D}_\mathrm{u} = \mathbf{D}_\mathrm{d}$. Therefore, the large-scale fading coefficient $\beta_{ui}$ is given by
\begin{equation}\label{PLandShadowF1}
\beta_{ui} = \frac{10^{\omega_i/10}}{1+(d_i/d_0)^l},
\end{equation}
where $10^{\omega_i/10}$ is log-normally distributed with standard deviation of $\sigma$ dB, and $\omega_i\sim \mathcal{N}(0,\sigma^2)$ shows the log-normal attenuation which is also expressed in dB, $d_i$ denotes the distance from user $S_i$ to the relay and $d_0$ indicates the breakpoint in the path loss curve, and $l$ is the path loss exponent. In this paper, assuming that $d_i$ is uniformly distributed between 0 and 500 m, in addition, we set $\sigma = 8$ dB, $d_0 = 200$ m and $l=3.8$ \cite{massivemimo1}.

\begin{figure}
\centering
\includegraphics[width=2.5in,height=1.9in]{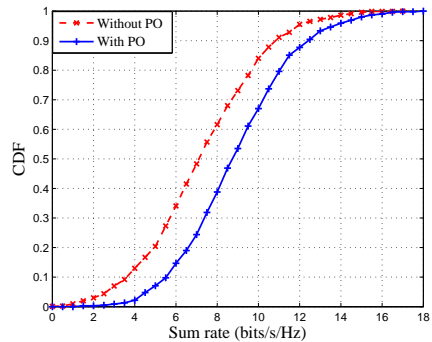}
\caption{CDF vs. sum rate ($N_t=N_r=200$).}\label{CDFvsSumRate}
\end{figure}

Fig. \ref{compairICEwithCCE-SRvsTc} depicts the sum rate versus the coherence interval in symmetric system and we set $P_R=KP_S$. According to Corollary \ref{cor2}, we have $T_c^E=21.2$ when $N_t=50$, $T_c^E=25.4$ when $N_t=200$ and $T_c^E=29.1$ when $N_t=500$. It is seen that the analytical results in Corollary \ref{cor2} match exactly with the numerical results. For instance, we observe that when $N_t=500$ and $T_c<29.1$, the CCE scheme outperforms the ICE scheme, while the ICE scheme performs better when $T_c>29.1$. In addition, the figure verifies that $T_c^E$ can be increased by adding the relay antenna number.

Moreover, we are also interested in the comparison under the case of asymmetric system. Here, the “asymmetric system” is defined as the case
where all $2K$ users are randomly located, i.e., all large-scale fading coefficients are generated based on (\ref{PLandShadowF1}). Thus, in Fig. \ref{compairICEwithCCE-SRvsNt}, we compare the system performance of the two channel estimation schemes in symmetric and asymmetric systems, respectively\footnote{Note that it is very difficult to derive the theoretical analysis as considering the asymmetric scenario. As a result, we only exhibit the simulation results of the asymmetric scenario.}.
In the figure, we set $T_c=15$ and $P_R=KP_S$. We observe that the sum rate performance in symmetric systems is better than that in asymmetric systems. Besides, the CCE scheme outperforms the ICE scheme in symmetric system, and the performance gap increases with the increase of $N_t$. However, the CCE scheme performs worse than ICE scheme in asymmetric system. Therefore, Fig.  \ref{compairICEwithCCE-SRvsNt} implies that the CCE scheme is preferable only in symmetric systems when $T_c<T_c^E$.

\subsection{Power Optimization (PO)}
In this subsection, we evaluate the power allocation algorithms. And we only consider the ICE scheme.

In Fig. \ref{CDFvsSumRate}, we examine the power allocation algorithm for sum rate maximization by comparing the cumulative distribution function (CDF) of sum rate in the case of power optimization to that without power optimization.
The practical large-scale fading model (\ref{PLandShadowF1}) is adopted and we choose $N_t=N_r=200$.
The ``Without PO'' curve corresponds to the uniform power allocation, i.e. $P_k=10$ dB ($\forall k$) and $P_R=KP_k$, and is obtained by performing (\ref{plofdiffuserP1}). The ``With PO'' curve is achieved by running the Algorithm 1\footnote{In our simulation, we stop the algorithm after a few iterations before it converges, and the maximum iteration number is set to be 5.}.
It is seen that the system achieves a large improvement for the sum rate performance with the optimal power allocation. For example, the 50th percentile of the sum rate is improved by about 1.9 bits/s/Hz.

\begin{figure}
\centering
\includegraphics[width=2.5in,height=1.9in]{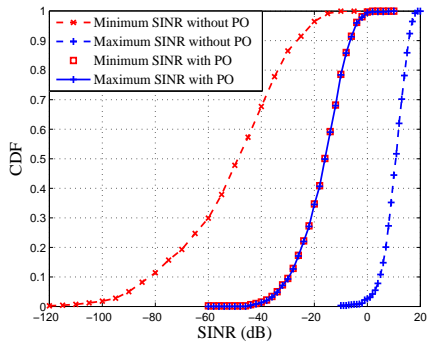}
\caption{CDF vs. SINR ($N_t=N_r=200$).}\label{CDFvsSINR}
\end{figure}

\begin{figure}
\centering
\includegraphics[width=2.5in,height=1.9in]{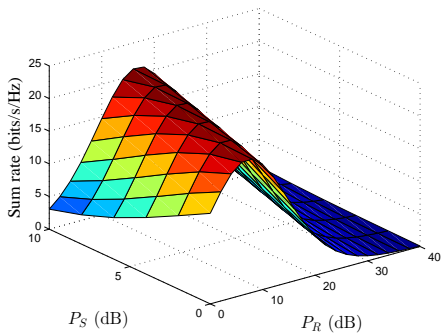}
\caption{Sum rate vs. $P_S$ and $P_R$ ($N_t=N_r=200$).}\label{SRvsPsPr-Powercontrol1}
\end{figure}

Then we evaluate the power allocation scheme based on the max-min fairness design criterion. Fig. \ref{CDFvsSINR} plots the CDF curves of the minimum and maximum SINR with and without power optimization, respectively. We observe that with optimal power allocation, the minimum SINR among all the users improves significantly. For example, the 50th percentile of minimum SINR with power optimization increases by about 31 dB  over that with uniform power allocation. Furthermore, it is seen that each user will achieve the same SINR under optimal power allocation.

Fig. \ref{SRvsPsPr-Powercontrol1} depicts the sum rate given by (\ref{eq19}) vs. $P_S$ and $P_R$ in the special case in which $\mathbf{D}_\mathrm{u}=\mathbf{D}_\mathrm{d}=\beta\mathbf{I}_{\mathrm{2K}}$, when $N_t=N_r=200$. Without loss of generality, we assume that $\beta=1$. We see that if $P_S$ is fixed, there will be an optimal $P_R$ for the maximum sum rate, and the performance will decline when $P_R$ is larger than this value, since the loop interference from the relay dominates the performance of FD systems in very large $P_R$.
In addition, for any finite $P_S$, the $P_R$ which contributes to the maximum sum rate has a linear relationship with $P_S$. Moreover, the figure indicates that the maximum sum rate is achieved when $P_S$ equals the peak value ($P_S^{max}=10$ dB). Therefore, Fig. \ref{SRvsPsPr-Powercontrol1} matches exactly with the result of Corollary \ref{corPowerControl1}, which gives the linear factor $\eta = 0.95K$ between $P_R$ and $P_S$.

\begin{figure}
\centering
\includegraphics[width=2.5in,height=1.9in]{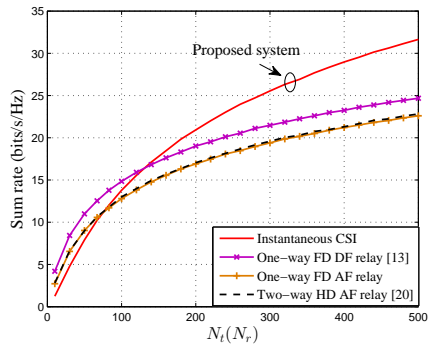}
\caption{Comparison of the proposed system with other schemes. (For fair comparison, in our scheme, $P_S{=}10$ dB, $P_R{=}KP_S{=}17$ dB; In one-way FD systems, $P_S{=}13$ dB, $P_R{=}17$ dB; In two-way HD systems, $P_S=13$ dB, $P_R{=}20$ dB. Note that the \textit{RF chains are conserved} in the comparison with HD systems \cite{massivemimo2}, i.e. the number of antennas at the HD relay node is $N_t$.)}\label{compareWithOneWayAndHD}
\end{figure}
\subsection{Comparison with Other Transmission Schemes}
In this subsection, we compare the sum rate performance of the system in this paper with those in one-way full-duplex systems and two-way half-duplex systems. We assume that $\mathbf{D}_\mathrm{u}=\mathbf{D}_\mathrm{d}=\mathbf{I}_{\mathrm{2K}}$ and only consider the ICE scheme.

In Fig. \ref{compareWithOneWayAndHD}, we compare the sum rate of the proposed system versus $N_t$ with those of other transmission schemes. The second and third curves in the legend denote the ergodic sum rates of the one-way FD decode-and-forward (DF) relaying system in \cite{massivemimo1} and the corresponding one-way AF relaying system, respectively, when the systems know the estimated instantaneous CSI. The fourth curve in the legend represents the ergodic sum rate for the two-way half-duplex AF relaying system in \cite{massivemimo2} with instantaneous CSI. For fair comparison, the total transmit powers on average during a time slot for different schemes are the same. And we consider the ``\textit{RF chain preserved}'' condition when compared with the half-duplex transmissions. Besides, all schemes employ MRC/MRT processing.

Fig. \ref{compareWithOneWayAndHD} indicates that our scheme performs significantly better than the performance of one-way FD relaying systems when $N_t$ is large, and the gain increases as $N_t$ grows. However, our scheme performs worse than the one-way system when $N_t$ is small (e.g. $N_t<130$ when compared with the one-way full-duplex DF relaying system in \cite{massivemimo1}), because there exist more interference terms in the proposed scheme than the one-way FD relaying (such as inter-user interferences, self-LI and more inter-pair interferences), and the interferences cannot be greatly reduced under small $N_t$. When the interference effect is greater than the additional multiplexing gain due to two-way relaying, the proposed scheme will provide a worse performance than the one-way relaying.
Besides, we observe that our scheme outperforms two-way HD relaying systems when $N_t$ is large, because the loop interference and inter-user interference in FD systems can be sharply decreased in large $N_t$ and FD systems can utilize time resources more efficiently. However, HD systems perform a little better in small $N_t$ (e.g. $N_t<75$), since the loop and inter-user interferences cannot be neglected in small $N_t$, thus leading to the performance degradation of FD systems.

\section{Conclusion}\label{conclusion}
In this paper, we investigated the achievable rate of a multi-pair two-way full-duplex AF relay system, where the relay adopted MRC/MRT processing and was equipped with large-scale antennas. When the number of relay antennas was large and finite, the approximate sum rates for the system were derived based on statistical channels, under individual channel estimation and composite channel estimation, respectively. It was shown that the derived sum rate expression is a tight approximation of the ergodic sum rate. In addition, we compared the two channel estimation schemes in terms of the achievable rate and showed that in symmetric systems, the composite channel estimation scheme performs better than the individual channel estimation scheme when the coherence interval is smaller than a certain value, and vice versa. Moreover, the power controls for the users and the relay were derived based on the achievable rate maximization and max-min fairness criterion, respectively. And the numerical results verified the accuracy of the analysis.

\appendices
\section*{appendix}
\subsection{Proofs of Equations (\ref{eq16}) and (\ref{eq17})}
Substituting (\ref{eq15}) into (\ref{eq5}) and (\ref{eq6}), and based on the property $\mathrm{Tr}(\mathbf{AB})=\mathrm{Tr}(\mathbf{BA})$, we obtain
\begin{align}
&\Delta_1 = \mathrm{Tr}\left\{ \mathbb{E}\left[ \left(\mathbf{\hat{F}}^H\mathbf{\hat{F}}\right)^T \mathbf{T} \left(\mathbf{\hat{G}}^H\mathbf{G} \mathbf{G}^H\mathbf{\hat{G}} \right) \mathbf{T} \right] \right\}, \label{eq1a}\\
&\Delta_2 = \mathrm{Tr}\left\{ \mathbb{E}\left[ \left(\mathbf{\hat{F}}^H\mathbf{\hat{F}}\right)^T \mathbf{T} \left(\mathbf{\hat{G}}^H \mathbf{\hat{G}} \right) \mathbf{T} \right] \right\}. \label{eq2a}
\end{align}

Assume that the matrices $\mathbf{F}$ and $\mathbf{G}$ are mutually independent, we can easily get that
\begin{align}
& \mathbb{E}\left\{\mathbf{\hat{F}}^H\mathbf{\hat{F}}\right\} =\mathbb{E}\left\{\left(\mathbf{F}+\mathbf{Z}_\mathrm{t}\right)^H
\left(\mathbf{F}+\mathbf{Z}_\mathrm{t}\right)\right\}
= N_t\mathbf{\hat{D}}_\mathrm{d}, \label{eq4a}\\
& \mathbb{E}\left\{\mathbf{\hat{G}}^H\mathbf{\hat{G}}\right\} =\mathbb{E}\left\{\left(\mathbf{G}+\mathbf{Z}_\mathrm{r}\right)^H
\left(\mathbf{G}+\mathbf{Z}_\mathrm{r}\right)\right\}
= N_r\mathbf{\hat{D}}_\mathrm{u}. \label{eq5a}
\end{align}
Substituting (\ref{eq4a}) and (\ref{eq5a}) into (\ref{eq2a}), we obtain (\ref{eq17}) for $\Delta_2$.

Besides, for obtaining $\Delta_1$, we calculate
\begin{align}\label{1delta1}
&\mathbb{E}\left\{\mathbf{\hat{G}}^H\mathbf{G} \mathbf{G}^H\mathbf{\hat{G}} \right\} = \mathbb{E}\left\{ \left( \mathbf{G} + \mathbf{Z}_\mathrm{r} \right)^H\mathbf{G} \mathbf{G}^H \left( \mathbf{G} + \mathbf{Z}_\mathrm{r} \right) \right\} \nonumber\\
&\hspace{9mm}=\mathbb{E}\left\{\mathbf{G}^H\mathbf{G} \mathbf{G}^H\mathbf{G} \right\}
+\mathbb{E}\left\{\mathbf{Z}_\mathrm{r}^H\mathbf{G} \mathbf{G}^H\mathbf{Z}_\mathrm{r} \right\} \nonumber\\
&\hspace{9mm}+\mathbb{E}\left\{\mathbf{G}^H\mathbf{G} \mathbf{G}^H\mathbf{Z}_\mathrm{r} \right\}
+ \mathbb{E}\left\{\mathbf{Z}_\mathrm{r}^H\mathbf{G} \mathbf{G}^H\mathbf{G} \right\}.
\end{align}
Since $\mathbb{E}\left\{\mathbf{\hat{F}}^H\mathbf{\hat{F}}\right\}$ is a diagonal matrix, we only need to know the diagonal entries of $\mathbb{E}\left\{\mathbf{\hat{G}}^H\mathbf{G} \mathbf{G}^H\mathbf{\hat{G}} \right\}$.

Firstly, we calculate $\mathbb{E}\left\{\mathbf{G}^H\mathbf{G} \mathbf{G}^H\mathbf{G} \right\}$. Let $\mathbf{M} = \mathbf{G}^H\mathbf{G}$, thus the $i$-th diagonal entry of $\mathbb{E}\left\{\mathbf{MM}^H\right\}$ is given by
\begin{equation}\label{delta1M1}
\mathbb{E}\left[ \mathbf{MM}^H \right]_{ii}
= \mathbb{E}\left\{ \left| \mathbf{g}_\mathrm{i}^H\mathbf{g}_\mathrm{i} \right|^2 \right\}
+\sum_{j=1,j\neq i}^{2K}{ \mathbb{E}\left\{ \left| \mathbf{g}_\mathrm{i}^H\mathbf{g}_\mathrm{j} \right|^2 \right\} }.
\end{equation}
Based on the properties of Wishart matrix \cite{randomMatrix}, we can get that
\begin{equation}\label{delta1M2}
\mathbb{E}\left\{ \left| \mathbf{g}_\mathrm{i}^H\mathbf{g}_\mathrm{i} \right|^2 \right\}
{=} \mathbb{E}\left\{ \mathrm{Tr}\left[\left( \mathbf{g}_\mathrm{i}^H\mathbf{g}_\mathrm{i} \right)^2\right] \right\}
{=}N_r\left( N_r+1 \right)\beta_{ui}^2.
\end{equation}
Besides,
\begin{equation}\label{delta1M3}
\mathbb{E}\left\{ \left| \mathbf{g}_\mathrm{i}^H\mathbf{g}_\mathrm{j} \right|^2 \right\}
{=}\mathbb{E}\left\{  \mathbf{g}_\mathrm{i}^H \mathbb{E}\left[\mathbf{g}_\mathrm{j}\mathbf{g}_\mathrm{j}^H\right]\mathbf{g}_\mathrm{i}  \right\}
{=}N_r\beta_{ui}\beta_{uj}, j{\neq} i.
\end{equation}
Thus we have
\begin{equation}\label{delta1M4}
\mathbb{E}\left[ \mathbf{G}^H\mathbf{G} \mathbf{G}^H\mathbf{G} \right]_{ii}
=N_r^2\beta_{ui}^2 + N_r\beta_{ui}\sum_{j=1}^{2K}\beta_{uj}.
\end{equation}

Secondly, we obtain
\begin{align}\label{2delta1}
\mathbb{E}\left\{\mathbf{Z}_\mathrm{r}^H\mathbf{G} \mathbf{G}^H\mathbf{Z}_\mathrm{r} \right\}
&= \mathbb{E}\left\{\mathbf{Z}_\mathrm{r}^H \mathrm{E}\left[ \mathbf{G} \mathbf{G}^H \right]\mathbf{Z}_\mathrm{r} \right\} \nonumber\\
&=\mathrm{Tr}(\mathbf{D}_\mathrm{u}) \cdot \mathbb{E}\left\{\mathbf{Z}_\mathrm{r}^H\mathbf{Z}_\mathrm{r} \right\} \nonumber\\
&=\mathrm{Tr}(\mathbf{D}_\mathrm{u}) \cdot N_r\frac{\sigma_{nr}^2}{\tau P_p} \mathbf{I}_{\mathrm{2K}}.
\end{align}

And then, we calculate $\mathbb{E}\left\{\mathbf{G}^H\mathbf{G} \mathbf{G}^H\mathbf{Z}_\mathrm{r} \right\}$. The $j$-th row of $\mathbb{E}\left\{\mathbf{G}^H\mathbf{G} \mathbf{G}^H \right\}$ is represented as
\begin{align}\label{3delta1}
&\mathbb{E}[ \mathbf{G}^H\mathbf{G} \mathbf{G}^H ]_\mathrm{j}
=\mathbb{E}\left\{\mathbf{g}_\mathrm{j}^H\mathbf{g}_\mathrm{j}\mathbf{g}_\mathrm{j}^H
\right\}
+\sum_{i=1,i\neq j}^{2K}{ \mathbb{E}\left\{\mathbf{g}_\mathrm{j}^H\mathbf{g}_\mathrm{i}\mathbf{g}_\mathrm{i}^H
\right\} } \nonumber\\
&\hspace{10mm}=\mathbb{E}\left\{\left\|\mathbf{g}_\mathrm{j}\right\|_2^2 \mathbf{g}_\mathrm{j}^H
\right\}
+\sum_{i=1,i\neq j}^{2K}{ \mathbb{E}\left\{\mathbf{g}_\mathrm{j}^H
\right\} \mathbb{E}\left\{\mathbf{g}_\mathrm{i}\mathbf{g}_\mathrm{i}^H
\right\} }.
\end{align}
It is apparent that $\mathbb{E}\{\mathbf{g}_\mathrm{j}^H\} = \mathbf{0}_{1\times\mathrm{N_r}} $ and $\mathbb{E}\{\mathbf{g}_\mathrm{i}\mathbf{g}_\mathrm{i}^H\} = \beta_{ui}\mathbf{I}_{\mathrm{N_r}} $, $\forall i, j$. And since $x^3f(x)$ ($f(x)$ is an even function which represents the probability density function for the real or imaginary part of the element of $\mathbf{g}_\mathrm{j}$) is an odd function, we can easily get that
\begin{equation}\label{7a}
\mathbb{E}\left\{ \| \mathbf{g}_\mathrm{j}\|_2^2 \mathbf{g}_\mathrm{j}^H \right\} = \mathbf{0}_{1\times \mathrm{N_r}}, ~\forall j.
\end{equation}
Thus we obtain $\mathbb{E}\left\{\mathbf{G}^H\mathbf{G} \mathbf{G}^H \right\} = \mathbf{0}_{\mathrm{2K}\times \mathrm{N_r}}$, then we get
\begin{equation}\label{4delta1}
\mathbb{E}\left\{\mathbf{G}^H\mathbf{G} \mathbf{G}^H\mathbf{Z}_\mathrm{r} \right\}
{=}\mathbb{E}\left\{\mathbf{G}^H\mathbf{G} \mathbf{G}^H\right\}
\mathbb{E}\left\{\mathbf{Z}_\mathrm{r} \right\}
{=}\mathbf{0}_{\mathrm{2K}\times \mathrm{2K}}.
\end{equation}

Based on (\ref{delta1M4}), (\ref{2delta1}) and (\ref{4delta1}), we have
\begin{align}\label{5delta1}
\mathbb{E}\left[ \mathbf{\hat{G}}^H\mathbf{G} \mathbf{G}^H\mathbf{\hat{G}} \right]_{ii}
&{=}N_r^2\beta_{ui}^2 {+} N_r\beta_{ui}\sum_{j=1}^{2K}\beta_{uj}
{+} N_r\frac{\sigma_{nr}^2}{\tau P_p}\mathrm{Tr}(\mathbf{D}_\mathrm{u}) \nonumber\\
&{=}N_r^2\beta_{ui}^2 {+} N_r\hat{\beta}_{ui}\sum_{j=1}^{2K}\beta_{uj}.
\end{align}
Substituting (\ref{eq4a}) and (\ref{5delta1}) into (\ref{eq1a}), we achieve $\Delta_1$ in (\ref{eq16}).

\subsection{Proof of Theorem 1}
From (\ref{eq14}), to obtain $\gamma_k$, we give the following calculations.
\subsubsection{Compute $\mathbb{E}\left\{ \mathbf{f}^T_\mathrm{k}\mathbf{W}\mathbf{g}_{\mathrm{k}'} \right\}$}
With MRC/MRT processing, we have
\begin{align}\label{eq1b}
\mathbb{E}\left\{ \mathbf{f}^T_\mathrm{k}\mathbf{W}\mathbf{g}_{\mathrm{k}'} \right\}
&= \mathbb{E}\left\{ \mathbf{f}^T_\mathrm{k} \left(\mathbf{F}+\mathbf{Z}_\mathrm{t}\right)^* \right\} \mathbf{T} \mathbb{E}\left\{ \left(\mathbf{G}+\mathbf{Z}_\mathrm{r}\right)^H\mathbf{g}_{\mathrm{k}'} \right\} \nonumber\\
&= (N_t\beta_{dk} \mathbf{1}_\mathrm{k}) \mathbf{T} (N_r\beta_{uk'} \mathbf{1}_\mathrm{k'}^T) \nonumber\\
&= N_tN_r\beta_{dk}\beta_{uk'},
\end{align}
where $\mathbf{1}_\mathrm{k}$ is a $1\times2K$ vector whose $k$-th element is 1, and the others are zeros.

\begin{figure*}[!t] 
\normalsize
\setcounter{equation}{99}
\begin{equation}\label{exactSINRofLowerbound}
\gamma_k = \frac{N_t}
{
\begin{array}{l}
  \sum\limits_{j=1,j\neq k}^{2K} \left( \kappa \frac{\beta_{uj}\hat{\beta}_{uk'}}{\beta_{uk'}^2} + \frac{\hat{\beta}_{dj'}\beta_{uj}^2}{\beta_{dk}\beta_{uk'}^2} + \frac{\beta_{uj}}{N_r\beta_{dk}\beta_{uk'}^2} \sum\limits_{i=1}^{2K}\hat{\beta}_{di}\hat{\beta}_{ui'} \right)
  + \left( \frac{P_R\sigma_{LI}^2}{P_S} + \frac{\sigma_{nr}^2}{P_S} \right)\left(  \frac{\kappa\hat{\beta}_{uk'}}{\beta_{uk'}^2} + \frac{1}{N_r\beta_{dk}\beta_{uk'}^2} \sum\limits_{i=1}^{2K}\hat{\beta}_{di}\hat{\beta}_{ui'} \right)  \\
  + \frac{1}{\beta_{dk}^2\beta_{uk'}^2} \left( \sum\limits_{i\in \mathrm{U}_k} \sigma_{k,i}^2 + \frac{\sigma_n^2}{P_S} \right) \left[ \frac{P_S}{P_R}\sum\limits_{i=1}^{2K}\hat{\beta}_{di} \left( \beta_{ui'}^2 + \frac{\hat{\beta}_{ui'}}{N_r} \sum\limits_{j=1}^{2K}\beta_{uj} \right) + \frac{P_R\sigma_{LI}^2 + \sigma_{nr}^2}{N_rP_R}\sum\limits_{i=1}^{2K}\hat{\beta}_{di}\hat{\beta}_{ui'} \right]
\end{array}
}.
\end{equation}
\hrulefill
\vspace*{0.0pt}
\end{figure*}

\setcounter{equation}{89}

\subsubsection{Compute $\mathbb{V}\mathrm{ar}( \mathbf{f}^T_\mathrm{k}\mathbf{W}\mathbf{g}_{\mathrm{k}'} )$}
It is calculated that
\begin{equation}\label{eq2b}
\mathbb{V}\mathrm{ar}( \mathbf{f}^T_\mathrm{k}\mathbf{W}\mathbf{g}_{\mathrm{k}'} )
= \mathrm{E}\left\{ \left| \mathbf{f}^T_\mathrm{k}\mathbf{W}\mathbf{g}_{\mathrm{k}'}
 \right|^2 \right\}
- \left|\mathbb{E}\left\{ \mathbf{f}^T_\mathrm{k}\mathbf{W}\mathbf{g}_{\mathrm{k}'} \right\}\right|^2,
\end{equation}
in which
\begin{align}\label{eq2bb}
&\mathrm{E}\left\{ \left| \mathbf{f}^T_\mathrm{k}\mathbf{W}\mathbf{g}_{\mathrm{k}'}
\right|^2 \right\}
=\mathrm{E}\left\{  \mathrm{Tr}\left[ \mathbf{f}^T_\mathrm{k}\mathbf{W}\mathbf{g}_{\mathrm{k}'}
\mathbf{g}_{\mathrm{k}'}^H \mathbf{W}^H\mathbf{f}_\mathrm{k}^* \right]  \right\} \nonumber\\
&\hspace{10mm}=\mathrm{Tr}\left\{
\mathbb{E}\left[\mathbf{\hat{F}}^T \mathbf{f}_\mathrm{k}^*\mathbf{f}_\mathrm{k}^T \mathbf{\hat{F}}^* \right] \mathbf{T}
\mathbb{E}\left[\mathbf{\hat{G}}^H \mathbf{g}_\mathrm{k'}\mathbf{g}_\mathrm{k'}^H \mathbf{\hat{G}} \right]
\mathbf{T}
\right\}.
\end{align}

First, we compute
\begin{align}\label{var1}
\mathbb{E}\left[\mathbf{\hat{F}}^T \mathbf{f}_\mathrm{k}^*\mathbf{f}_\mathrm{k}^T \mathbf{\hat{F}}^* \right]
&=\mathbb{E}\left[\mathbf{F}^T \mathbf{f}_\mathrm{k}^*\mathbf{f}_\mathrm{k}^T \mathbf{F}^* \right]
+\mathbb{E}\left[\mathbf{Z}_\mathrm{t}^T \mathbf{f}_\mathrm{k}^*\mathbf{f}_\mathrm{k}^T \mathbf{Z}_\mathrm{t}^* \right] \nonumber\\
&+\mathbb{E}\left[\mathbf{F}^T \mathbf{f}_\mathrm{k}^*\mathbf{f}_\mathrm{k}^T \mathbf{Z}_\mathrm{t}^* \right]
+\mathbb{E}\left[\mathbf{Z}_\mathrm{t}^T \mathbf{f}_\mathrm{k}^*\mathbf{f}_\mathrm{k}^T \mathbf{F}^* \right].
\end{align}
Based on (\ref{delta1M2}) and (\ref{7a}), the ($i$,$j$)-th element of $\mathbb{E}\left[\mathbf{F}^T \mathbf{f}_\mathrm{k}^*\mathbf{f}_\mathrm{k}^T \mathbf{F}^* \right]$ is
\begin{align}\label{var2}
\mathbb{E}\left[\mathbf{F}^T \mathbf{f}_\mathrm{k}^*\mathbf{f}_\mathrm{k}^T \mathbf{F}^* \right]_{ij}
&=
   \begin{cases}
     \mathbb{E}\left\{\left\| \mathbf{f}_\mathrm{k} \right\|_2^4\right\} \\
     \mathrm{E}\left\{ \mathbf{f}_\mathrm{i}^T \mathbb{E}\left[\mathbf{f}_\mathrm{k}^*\mathbf{f}_\mathrm{k}^T\right] \mathbf{f}_\mathrm{i}^* \right\}\\
     \mathrm{E}\left[ \mathbf{f}_\mathrm{i}^T \mathbf{f}_\mathrm{k}^*\mathbf{f}_\mathrm{k}^T \mathbf{f}_\mathrm{j}^* \right]
   \end{cases} \nonumber\\
&=
   \begin{cases}
     N_t(N_t+1)\beta_{dk}^2, & i=j=k; \\
     N_t\beta_{dk}\beta_{di}, & i=j\neq k; \\
     0, & i\neq j.
   \end{cases}
\end{align}
Similar to (\ref{4delta1}), we easily get $\mathbb{E}\left[\mathbf{F}^T \mathbf{f}_\mathrm{k}^*\mathbf{f}_\mathrm{k}^T \mathbf{Z}_\mathrm{t}^* \right] {=} \mathbf{0}_{\mathrm{2K}\times \mathrm{2K}}$. In addition,
\begin{equation}\label{var3}
\mathbb{E}\left[\mathbf{Z}_\mathrm{t}^T \mathbf{f}_\mathrm{k}^*\mathbf{f}_\mathrm{k}^T \mathbf{Z}_\mathrm{t}^* \right]
{=}\mathbb{E}\left\{\mathbf{Z}_\mathrm{t}^T \mathbb{E}\left[\mathbf{f}_\mathrm{k}^*\mathbf{f}_\mathrm{k}^T\right] \mathbf{Z}_\mathrm{t}^* \right\}
{=}N_t\beta_{dk}\frac{\sigma_{nr}^2}{\tau P_p} \mathbf{I}_{\mathrm{2K}}.
\end{equation}
Thus  $\mathbb{E}\left[\mathbf{\hat{F}}^T \mathbf{f}_\mathrm{k}^*\mathbf{f}_\mathrm{k}^T \mathbf{\hat{F}}^* \right]$ is a diagonal matrix and the ($i$, $i$)-th entry is
\begin{equation}\label{var4}
\mathbb{E}\left[\mathbf{\hat{F}}^T \mathbf{f}_\mathrm{k}^*\mathbf{f}_\mathrm{k}^T \mathbf{\hat{F}}^* \right]_{ii}
=
   \begin{cases}
     N_t^2\beta_{dk}^2 + N_t\beta_{dk}\hat{\beta}_{dk} & i=k; \\
     N_t\beta_{dk}\hat{\beta}_{di} & i\neq k.
   \end{cases}
\end{equation}

In the same way, we get that $\mathbb{E}\left[\mathbf{\hat{G}}^H \mathbf{g}_\mathrm{k'}\mathbf{g}_\mathrm{k'}^H \mathbf{\hat{G}} \right]$ is also a diagonal matrix, and the ($i$, $i$)-element is
\begin{equation}\label{var5}
\mathbb{E}\left[\mathbf{\hat{G}}^H \mathbf{g}_\mathrm{k'}\mathbf{g}_\mathrm{k'}^H \mathbf{\hat{G}} \right]_{ii}
{=}
   \begin{cases}
     N_r^2\beta_{uk'}^2 {+} N_r\beta_{uk'}\hat{\beta}_{uk'} & i=k'; \\
     N_r\beta_{uk'}\hat{\beta}_{ui} & i\neq k'.
   \end{cases}
\end{equation}

As a result, substituting (\ref{var4}) and (\ref{var5}) into (\ref{eq2bb}) and then substituting (\ref{eq2bb}) and (\ref{eq1b}) into (\ref{eq2b}), we obtain
\begin{align}\label{var6}
\mathbb{V}\mathrm{ar}( \mathbf{f}^T_\mathrm{k}\mathbf{W}\mathbf{g}_{\mathrm{k}'} )
&=N_t^2N_r\beta_{dk}^2\beta_{uk'}\hat{\beta}_{uk'}
+N_tN_r^2\beta_{dk}\hat{\beta}_{dk}\beta_{uk'}^2 \nonumber\\
&+N_tN_r\beta_{dk}\beta_{uk'}\sum_{i=1}^{2K}\hat{\beta}_{di}\hat{\beta}_{ui'}.
\end{align}

\subsubsection{Compute $\mathbb{E}\left\{\left|\mathbf{f}^T_\mathrm{k}\mathbf{W}\mathbf{g}_{\mathrm{j}}\right|^2 \right\}$, ($j\neq k, k'$)}
With the similar way as (\ref{eq2bb}), we get
\begin{align}\label{Egj1}
\mathbb{E}\left\{\left|\mathbf{f}^T_\mathrm{k}\mathbf{W}\mathbf{g}_{\mathrm{j}}\right|^2 \right\}
&=N_t^2N_r\beta_{dk}^2\beta_{uj}\hat{\beta}_{uk'}
+N_tN_r^2\beta_{dk}\hat{\beta}_{dj'}\beta_{uj}^2 \nonumber\\
&+N_tN_r\beta_{dk}\beta_{uj}\sum_{i=1}^{2K}{ \hat{\beta}_{di}\hat{\beta}_{di'} }.
\end{align}

\subsubsection{Compute $\mathbb{E}\left\{ \left\|\mathbf{f}^T_\mathrm{k}\mathbf{W}\right\|_2^2 \right\}$}
With $\mathbf{W}{=}\mathbf{\hat{F}}^*\mathbf{T}\mathbf{\hat{G}}^H$, based on (\ref{var4}), we have
\begin{align}\label{eq6b}
&\mathbb{E}\left\{ \left\|\mathbf{f}^T_\mathrm{k}\mathbf{W}\right\|_2^2 \right\}
= \mathbb{E}\left\{ \mathbf{f}^T_\mathrm{k} \mathbf{\hat{F}}^*\mathbf{T} \mathbb{E}\left[ \mathbf{\hat{G}}^H \mathbf{\hat{G}} \right] \mathbf{T}\mathbf{\hat{F}}^T \mathbf{f}^*_\mathrm{k}  \right\} \nonumber\\
&\hspace{15mm}= N_r \mathrm{Tr}\left\{ \mathbb{E}\left[ \mathbf{\hat{F}}^T \mathbf{f}^*_\mathrm{k}\mathbf{f}^T_\mathrm{k} \mathbf{\hat{F}}^*\right] \mathbf{T} \mathbf{\hat{D}}_\mathrm{u} \mathbf{T}  \right\} \nonumber\\
&\hspace{15mm}= N_r \sum_{i=1}^{2K}{ \left\{ \hat{\beta}_{ui'}\mathbb{E}\left[ \mathbf{\hat{F}}^T \mathbf{f}^*_\mathrm{k}\mathbf{f}^T_\mathrm{k} \mathbf{\hat{F}}^* \right]_{ii} \right\} } \nonumber\\
&\hspace{15mm}=N_t^2N_r\beta_{dk}^2\hat{\beta}_{uk'} + N_tN_r\beta_{dk}\sum_{i=1}^{2K}{ \hat{\beta}_{di}\hat{\beta}_{ui'} }.
\end{align}

So far, we can obtain $\gamma_k$  given by (\ref{exactSINRofLowerbound}) (see top of this page) by substituting $\alpha$, (\ref{eq1b}), (\ref{var6}), (\ref{Egj1}) and (\ref{eq6b}) into (\ref{eq14}). Then the closed-form expression for the proposed lower bound  given by (\ref{eq13}) can be achieved.

However, the expression (\ref{exactSINRofLowerbound}) is very lengthy. When $\kappa$ is fixed and $N_r\gg 2K$, we only retain the items with the highest power of $N_tN_r$ in equations (\ref{var6}) $\sim$ (\ref{eq6b}) for getting a more concise expression.
This approximation is also imposed on the factor $\alpha$. And then we arrive at (\ref{eq19}).
It is expected to be a tight approximation, especially in the regime of very large $N_r$.

\bibliography{reference}

\begin{thebibliography}{10}
\providecommand{\url}[1]{#1}
\csname url@samestyle\endcsname
\providecommand{\newblock}{\relax}
\providecommand{\bibinfo}[2]{#2}
\providecommand{\BIBentrySTDinterwordspacing}{\spaceskip=0pt\relax}
\providecommand{\BIBentryALTinterwordstretchfactor}{4}
\providecommand{\BIBentryALTinterwordspacing}{\spaceskip=\fontdimen2\font plus
\BIBentryALTinterwordstretchfactor\fontdimen3\font minus
  \fontdimen4\font\relax}
\providecommand{\BIBforeignlanguage}[2]{{%
\expandafter\ifx\csname l@#1\endcsname\relax
\typeout{** WARNING: IEEEtran.bst: No hyphenation pattern has been}%
\typeout{** loaded for the language `#1'. Using the pattern for}%
\typeout{** the default language instead.}%
\else
\language=\csname l@#1\endcsname
\fi
#2}}
\providecommand{\BIBdecl}{\relax}
\BIBdecl

\bibitem{massiveMIMOnextg}
E.~G. Larsson, O.~Edfors, F.~Tufvesson, and T.~L. Marzetta, ``{Massive MIMO for
  next generation wireless systems},'' \emph{IEEE Commun. Mag.}, vol.~52,
  no.~2, pp. 186--195, February 2014.

\bibitem{massiveMIMOhowManyAntennas}
J.~Hoydis, S.~ten Brink, and M.~Debbah, ``{Massive MIMO in the UL/DL of
  Cellular Networks: How Many Antennas Do We Need?}'' \emph{IEEE J. Sel. Areas
  Commun.}, vol.~31, no.~2, pp. 160--171, February 2013.

\bibitem{massiveMIMO3GPPmeeting}
3GPP meetings for group R1. [Online]. Available:
  http://www.3gpp.org/dynareport/Meetings-R1.htm?Itemid=404.

\bibitem{fdsystems}
A.~C. Cirik, Y.~Rong, and Y.~Hua, ``{Achievable Rates of Full-Duplex MIMO
  Radios in Fast Fading Channels With Imperfect Channel Estimation},''
  \emph{IEEE Trans. Signal Process.}, vol.~62, no.~15, pp. 3874--3886, Aug.
  2014.

\bibitem{LIC10Stanford}
J.~I. Choi, M.~Jain, K.~Srinivasan, P.~Levis, and S.~Katti, ``Achieving single
  channel, full duplex wireless communication,'' in \emph{Proc. MobiCom}.\hskip
  1em plus 0.5em minus 0.4em\relax ACM, 2010, pp. 1--12.

\bibitem{LIC11Stanford}
M.~Jain, J.~I. Choi, T.~Kim, D.~Bharadia, S.~Seth, K.~Srinivasan, P.~Levis,
  S.~Katti, and P.~Sinha, ``Practical, real-time, full duplex wireless,'' in
  \emph{Proc. MobiCom}.\hskip 1em plus 0.5em minus 0.4em\relax ACM, 2011, pp.
  301--312.

\bibitem{MitigationofLIfdMIMO}
T.~Riihonen, S.~Werner, and R.~Wichman, ``{Mitigation of Loopback
  Self-Interference in Full-Duplex MIMO Relays},'' \emph{IEEE Trans. Signal
  Process.}, vol.~59, no.~12, pp. 5983--5993, Dec. 2011.

\bibitem{linearFDMIMORelay14C}
C.~Y.~A. Shang, P.~J. Smith, G.~K. Woodward, and H.~A. Suraweera, ``Linear
  transceivers for full duplex mimo relays,'' in \emph{Proc. Australian Commun.
  Theory Workshop (AusCTW)}, Feb. 2014, pp. 11--16.

\bibitem{LCdesignFDMIMO14TWC}
H.~A. Suraweera, I.~Krikidis, G.~Zheng, C.~Yuen, and P.~J. Smit,
  ``{Low-Complexity End-to-End Performance Optimization in MIMO Full-Duplex
  Relay Systems},'' \emph{IEEE Trans. Wireless Commun.}, vol.~13, no.~2, pp.
  913--927, Feb. 2014.

\bibitem{Experiment12TWC}
M.~Duarte, C.~Dick, and A.~Sabharwal, ``{Experiment-Driven Characterization of
  Full-Duplex Wireless Systems},'' \emph{IEEE Trans. Wireless Commun.},
  vol.~11, no.~12, pp. 4296--4307, December 2012.

\bibitem{FDcancel110dB}
D.~Bharadia, E.~McMilin, and S.~Katti, ``Full duplex radios,'' \emph{ACM
  SIGCOMM}, vol.~43, no.~4, pp. 375--386, Aug. 2013.

\bibitem{FDinRelayand100dBsuppress}
M.~Heino and \textit{et al.}, ``{Recent advances in antenna design and
  interference cancellation algorithms for in-band full duplex relays},''
  \emph{IEEE Commun. Mag.}, vol.~53, no.~5, pp. 91--101, May 2015.

\bibitem{massivemimo1}
H.~Q. Ngo, H.~A. Suraweera, M.~Matthaiou, and E.~G. Larsson, ``{Multipair
  Full-Duplex Relaying With Massive Arrays and Linear Processing},'' \emph{IEEE
  J. Sel. Areas Commun.}, vol.~32, no.~9, pp. 1721--1737, Sept. 2014.

\bibitem{inBandFD14JSAC}
A.~Sabharwal and \textit{et al.}, ``{In-Band Full-Duplex Wireless: Challenges
  and Opportunities},'' \emph{IEEE J. Sel. Areas Commun.}, vol.~32, no.~9, pp.
  1637--1652, Sept. 2014.

\bibitem{FDrelayin3GPP}
{\textit{Physical Layer for Relaying Operation (Release 10)}, LTE spec, 3GPP TS
  36.216}, June 2011.

\bibitem{ManavTWAFOSTBC}
A.~M. K. and M.~R. Bhatnagar, ``{Performance Analysis of Two-Way AF MIMO
  Relaying of OSTBCs With Imperfect Channel Gains},'' \emph{IEEE Trans. Veh.
  Technol.}, vol.~63, no.~8, pp. 4118--4124, Oct. 2014.

\bibitem{ManavTWSatelliteRelay}
M.~R. Bhatnagar, ``{Making Two-Way Satellite Relaying Feasible: A Differential
  Modulation Based Approach},'' \emph{IEEE Trans. Commun.}, vol.~63, no.~8, pp.
  2836--2847, Aug. 2015.

\bibitem{fdMIMOrelay}
G.~Zheng, ``{Joint Beamforming Optimization and Power Control for Full-Duplex
  MIMO Two-Way Relay Channel},'' \emph{IEEE Trans. Signal Process.}, vol.~63,
  no.~3, pp. 555--566, Feb. 2015.

\bibitem{HASura13ICC}
H.~A. Suraweera, H.~Q. Ngo, T.~Q. Duong, C.~Yuen, and E.~G. Larsson,
  ``Multi-pair amplify-and-forward relaying with very large antenna arrays,''
  in \emph{Proc. IEEE Int. Conf. Commun. (ICC)}, June 2013, pp. 4635--4640.

\bibitem{massivemimo2}
H.~Cui, L.~Song, and B.~Jiao, ``{Multi-Pair Two-Way Amplify-and-Forward
  Relaying with Very Large Number of Relay Antennas},'' \emph{IEEE Trans.
  Wireless Commun.}, vol.~13, no.~5, pp. 2636--2645, May 2014.

\bibitem{ergodicRate}
S.~Jin, X.~Liang, K.~K. Wong, X.~Gao, and Q.~Zhu, ``{Ergodic Rate Analysis for
  Multipair Massive MIMO Two-Way Relay Networks},'' \emph{IEEE Trans. Wireless
  Commun.}, vol.~14, no.~3, pp. 1480--1491, March 2015.

\bibitem{zhengzhengxiang}
Z.~Xiang, M.~Tao, and X.~Wang, ``{Massive MIMO Multicasting in Noncooperative
  Cellular Networks},'' \emph{IEEE J. Sel. Areas Commun.}, vol.~32, no.~6, pp.
  1180--1193, June 2014.

\bibitem{TWC09oneFDantenna}
H.~Ju, E.~Oh, and D.~Hong, ``{Improving efficiency of resource usage in two-hop
  full duplex relay systems based on resource sharing and interference
  cancellation},'' \emph{IEEE Trans. Wireless Commun.}, vol.~8, no.~8, pp.
  3933--3938, August 2009.

\bibitem{SPL12oneFDantenna}
Y.~Hua, P.~Liang, Y.~Ma, A.~C. Cirik, and Q.~Gao, ``{A Method for Broadband
  Full-Duplex MIMO Radio},'' \emph{IEEE Signal Process. Lett.}, vol.~19,
  no.~12, pp. 793--796, Dec. 2012.

\bibitem{HongyuCuioneFDantenna}
H.~Cui, M.~Ma, L.~Song, and B.~Jiao, ``{Relay Selection for Two-Way Full Duplex
  Relay Networks With Amplify-and-Forward Protocol},'' \emph{IEEE Trans.
  Wireless Commun.}, vol.~13, no.~7, pp. 3768--3777, July 2014.

\bibitem{globecom05twoFDantenna}
C.~K. Lo, S.~Vishwanath, and R.~W. Heath, ``Rate bounds for mimo relay channels
  using precoding,'' in \emph{Proc. IEEE GLOBECOM}, Nov. 2005, pp. 1172--1176.

\bibitem{li1}
Y.~Sung, J.~Ahn, B.~V. Nguyen, and K.~Kim, ``{Loop-interference suppression
  strategies using antenna selection in full-duplex MIMO relays},'' in
  \emph{Proc. Int. Symp. Intelligent Signal Process. and Commun. Syst.
  (ISPACS)}, Dec. 2011, pp. 1--4.

\bibitem{LImodelJSAC}
L.~J. Rodr¨ªguez, N.~H. Tran, and T.~Le-Ngoc, ``{Performance of Full-Duplex AF
  Relaying in the Presence of Residual Self-Interference},'' \emph{IEEE J. Sel.
  Areas in Commun.}, vol.~32, no.~9, pp. 1752--1764, Sept. 2014.

\bibitem{LInoiseModelAndLICmethodsurvey}
J.~Lee and T.~Q.~S. Quek, ``{Hybrid Full-/Half-Duplex System Analysis in
  Heterogeneous Wireless Networks},'' \emph{IEEE Trans. Wireless Commun.},
  vol.~14, no.~5, pp. 2883--2895, May 2015.

\bibitem{GaussianNoiseWorstCase}
I.~Shomorony and A.~S. Avestimehr, ``Is gaussian noise the worst-case additive
  noise in wireless networks?'' in \emph{Proc. IEEE Int. Symp. Inf. Theory},
  July 2012, pp. 214--218.

\bibitem{InterUserInterferenceModel}
B.~Yin, M.~Wu, C.~Studer, J.~R. Cavallaro, and J.~Lilleberg, ``Full-duplex in
  large-scale wireless systems,'' in \emph{Proc. Asilomar Conf. Signals, Syst.,
  Comput. (ASILOMAR)}, Nov. 2013, pp. 1623--1627.

\bibitem{relayDelayTWC2011}
T.~Riihonen, S.~Werner, and R.~Wichman, ``{Hybrid Full-Duplex/Half-Duplex
  Relaying with Transmit Power Adaptation},'' \emph{IEEE Trans. Wireless
  Commun.}, vol.~10, no.~9, pp. 3074--3085, September 2011.

\bibitem{relayDelayTWC2013}
G.~Zheng, I.~Krikidis, and B.~o.~Ottersten, ``{Full-Duplex Cooperative
  Cognitive Radio with Transmit Imperfections},'' \emph{IEEE Trans. Wireless
  Commun.}, vol.~12, no.~5, pp. 2498--2511, May 2013.

\bibitem{fixedGainRelay}
M.~O. Hasna and M.~S. Alouini, ``{A performance study of dual-hop transmissions
  with fixed gain relays},'' \emph{IEEE Trans. Wireless Commun.}, vol.~3,
  no.~6, pp. 1963--1968, Nov. 2004.

\bibitem{LSchannelEstimation}
M.~Biguesh and A.~B. Gershman, ``{Training-based MIMO channel estimation: a
  study of estimator tradeoffs and optimal training signals},'' \emph{IEEE
  Trans. Signal Process.}, vol.~54, no.~3, pp. 884--893, March 2006.

\bibitem{howMuchTraining}
B.~Hassibi and B.~M. Hochwald, ``{How Much Training is Needed in
  Multiple-Antenna Wireless Links?}'' \emph{IEEE Trans. Inf. Theory}, vol.~49,
  no.~4, pp. 951--963, April 2003.

\bibitem{lowerBoundR11TWC}
J.~Jose, A.~Ashikhmin, T.~L. Marzetta, and S.~Vishwanath, ``{Pilot
  Contamination and Precoding in Multi-Cell TDD Systems},'' \emph{IEEE Trans.
  Wireless Commun.}, vol.~10, no.~8, pp. 2640--2651, August 2011.

\bibitem{lowerBoundR15TIT}
E.~Bjornson, J.~Hoydis, M.~Kountouris, and M.~Debbah, ``{Massive MIMO Systems
  With Non-Ideal Hardware: Energy Efficiency, Estimation, and Capacity
  Limits},'' \emph{IEEE Trans. Inf. Theory}, vol.~60, no.~11, pp. 7112--7139,
  Nov. 2014.

\bibitem{lowerBoundR15TWC}
E.~Bjornson, M.~Matthaiou, and M.~Debbah, ``{Massive MIMO with Non-Ideal
  Arbitrary Arrays: Hardware Scaling Laws and Circuit-Aware Design},''
  \emph{IEEE Trans. Wireless Commun.}, vol.~14, no.~8, pp. 4353--4368, Aug.
  2015.

\bibitem{Xiaojunzheng}
X.~Zheng, E.~Liu, Z.~Zhang, X.~Qu, R.~Wang, X.~Yin, and F.~Liu, ``An efficient
  pilot scheme in large-scale two-way relay systems,'' \emph{IEEE Commun.
  Lett.}, vol.~19, no.~6, pp. 1061--1064, June 2015.

\bibitem{approximationOnePlusR}
P.~C. Weeraddana, M.~Codreanu, M.~Latva-aho, and A.~Ephremides, ``{Resource
  Allocation for Cross-Layer Utility Maximization in Wireless Networks},''
  \emph{IEEE Trans. Veh. Technol.}, vol.~60, no.~6, pp. 2790--2809, July 2011.

\bibitem{cvx}
M.~Grant and S.~Boyd, ``{CVX}: Matlab software for disciplined convex
  programming, version 2.1,'' \url{http://cvxr.com/cvx}, Mar. 2014.

\bibitem{randomMatrix}
A.~M. Tulino and S.~Verd{\'u}, ``{Random matrix theory and wireless
  communications},'' \emph{Communications and Inf. theory}, vol.~1, no.~1, pp.
  1--182, Jun. 2004.

\end{thebibliography}
\end{document}